\setlist[enumerate]{leftmargin=.5in}
\setlist[itemize]{leftmargin=.5in}
\newcommand\bR{\mathbb{R}}
\newcommand\gt{\hat{x}}
\newcommand{\norm}[2]{\|#1\|_{#2}}
\DeclareMathOperator{\spn}{span}
\newcommand{\orth}{P_{\Sigma}^{\perp} }
\newcommand{\orthi}{P_{V_i}^{\perp} }
\newcommand\myeq{\mathrel{\stackrel{\makebox[0pt]{\mbox{\normalfont\tiny def}}}{=}}}
\crefname{hypothesis}{Hypothesis}{Hypotheses}
\crefname{fact}{Fact}{Facts}
\title{Stochastic Orthogonal Regularization for deep projective priors\thanks{Submitted to the editors DATE.
}}
\author{Ali Joundi\thanks{Université de Bordeaux, CNRS, Bordeaux INP, IMB, UMR 5251
		33400 Talence, France 
  (\email{ali.joundi@u-bordeaux.fr}).}
\and Yann Traonmilin\thanks{Université de Bordeaux, CNRS, Bordeaux INP, IMB, UMR 5251
	33400 Talence, France 
	(\email{yann.traonmilin@u-bordeaux.fr}).} 
\and Alasdair Newson\thanks{ISIR, Sorbonne Université, Paris, France} (\email{anewson@isir.upmc.fr}).}
\begin{document}

\maketitle

\begin{abstract}
Many crucial tasks in image processing and computer vision are formulated as inverse problems. Therefore, it is of great importance to design efficient and robust algorithms to solve these problems. In this paper, we focus on generalized projected gradient descent (GPGD) algorithms where generalized projections are realized with learned neural networks as they provide state-of-the-art results for imaging inverse problems. Indeed, neural networks allow for projections onto unknown low-dimensional sets that model complex data, such as images. We call these projections deep projective priors. In generic settings, when the orthogonal projection onto a low-dimensional model set is used, it has been shown, under a restricted isometry assumption, that the corresponding orthogonal PGD converges with a linear rate, yielding near-optimal convergence (within the class of GPGD methods) in the classical case of sparse recovery. However, for deep projective priors trained with classical mean squared error losses, there is little guarantee that the hypotheses for linear convergence are satisfied. In this paper, we propose a stochastic orthogonal regularization of the training loss for deep projective priors. This regularization is motivated by our theoretical results: a sufficiently good approximation of the orthogonal projection guarantees linear stable recovery with performance close to orthogonal PGD. We show experimentally, using two different deep projective priors (based on autoencoders and on denoising networks), that our stochastic orthogonal regularization yields projections that improve convergence speed and robustness of GPGD in challenging inverse problem settings, in accordance with our theoretical findings.

\end{abstract}
\begin{keywords}
	Inverse Problems, Plug-and-Play, Deep Prior, Optimization.
\end{keywords}

\section{Introduction}

An imaging inverse problem aims to recover an original image $\hat{x}\in \mathbb{R}^n$ from an observation $y\in \mathbb{R}^m$. We model this problem with the observation model:
\begin{equation}\label{eq:eq_pb_inverse}
	y=\mathbf{A}\hat{x}+e,
\end{equation}

where $\mathbf{A}\in \mathbb{R}^{m\times n}$ is an underdetermined measurement matrix ($m<n$) and $e$ is a measurement noise (e.g. $e \sim \mathcal{N}(0,\sigma^2\mathbf{I})$). For a stable estimation of $\hat{x}$ to be possible, a low-dimensional model on $\hat{x}$ is necessary. 
We formalize this by supposing that $\hat{x}\in\Sigma$, where $\Sigma$ is a low-dimensional model set (sets of sparse vectors are typical examples of such low-dimensional model sets).

In this paper, we focus on a specific class of algorithms for estimating $\hat{x}$: Generalized Projected Gradient Descent (GPGD) methods. These algorithms are described by the following iterations, given an initialization $x_0\in\mathbb{R}^n$:
\begin{equation}\label{eq:pgd}
	x_{k+1}=P_{\Sigma}(x_k)-\gamma \mathbf{A}^T(\mathbf{A}P_{\Sigma}(x_k)-\mathbf{y})),
\end{equation}
where $\gamma \geq 0$ is the step size and $P_{\Sigma}$ is a generalized projection operator that maps inputs from $\bR^n$ into the model set $\Sigma$. Recent work has proposed to carry out these projections onto $\Sigma$ via learned neural networks, yielding state-of-the-art results in solving inverse problems~\cite{dong2014learning,kamilov2017plug,zhang2018ffdnet,gupta2018cnn}. We refer to networks used in this manner as \emph{Deep Projective Priors}. More formally, we define a deep projective prior (DPP) as a function $P:\mathbb{R}^n \rightarrow \Sigma$, parametrized by a deep neural network and learned on a database $X$. In this case, the model $\Sigma$ is the set of fixed points of $P$. In this article, we will consider two variations of such priors: 
\begin{itemize}
	\item DPP's parametrized by \emph{autoencoders} that are the composition $P =f_D\circ f_E$ of an encoder $f_E:\mathbb{R}^n \rightarrow \mathbb{R}^d$ and a decoder $f_D:\mathbb{R}^d \to \mathbb{R}^n $. In this case, the dimensionality of the low-dimensional model $\Sigma$ is set explicitly through the latent space size $d$;
	\item DPP's parametrized by \emph{denoisers} (plug-and-play DPP): $P =D$ where  $D$ is a denoiser, i.e. for elements of $\hat{x}\in\Sigma$ some noise $\varepsilon$ (not related to $e$), we have $D(\hat{x}+\varepsilon) \approx \hat{x}$.
\end{itemize}
To train a DPP on a dataset $X$, we minimize a loss function $\mathcal{L}$ that depends on the type of prior:
\begin{equation}\label{eq:eq_ae}
	\begin{split}
		\mathcal{L}_{X,\text{AE}}(P) =\mathcal{L}_{X,\text{AE}}(f_D\circ f_E):=\sum_{x\in X} \norm{f_D\circ f_E(x)-x}{2}^2,~\text{for an autoencoder},\\
		~ \mathcal{L}_{X,\text{PnP}}(P) =\mathcal{L}_{X,\text{PnP}}(D):=\mathbb{E}_{x\in X, \varepsilon\sim\mathcal{N}(0,\xi^2\mathbf{I})}\left(\|D(x+\varepsilon)-x\|_2^2\right),~\text{for a denoiser}.
	\end{split}
\end{equation} 
where $f_E$ is an encoder, $f_D$ is a decoder, $D$ a denoiser and $\varepsilon \in \mathcal{N}(0,\xi^2\mathbf{I})$ is a noise.

It has recently been shown~\cite{traonmilin2024towards} that the GPGD converges \emph{linearly} under a restricted isometry property of $\mathbf{A}$ and the existence of a \emph{restricted} Lipschitz constant of the projection. In particular, the \emph{orthogonal projection} has been shown to have a restricted Lipschitz constant which guarantees this linear convergence, and, furthermore, yields near-optimal convergence when $\Sigma$ is a set of sparse vectors. Unfortunately, for DPPs trained with losses~\eqref{eq:eq_ae}, there is little guarantee that the hypotheses (the restricted Lipschitz property) for linear convergence are satisfied. In this paper, we ask ourselves how to improve the training of DPPs with respect to theoretical convergence guarantees.

\subsection{Contributions} 

We propose a Stochastic Orthogonal Regularization (SOR)\footnote{Not to be confused with Successive Over-Relaxation for solving linear systems of equations} loss that encourages the projection, represented by a DPP, to be orthogonal. In Section~\ref{section:theory}, we introduce the relevant theoretical framework and prove that when $P_\Sigma$ is approximately orthogonal, we can provide performance guarantees close to the orthogonal PGD: i.e., linear convergence under a restricted isometry property of the operator $\mathbf{A}$. These guarantees are the basis for the introduction of stochastic orthogonal regularization in Section~\ref{section:section_ortho}. In Section~\ref{section:exp}, we implement and test SOR for two different DPPs: autoencoders and denoisers and we compare the recovered solutions of noisy inverse problems with and without SOR. In Section~\ref{section:ae}, we first illustrate SOR's results on the MNIST dataset, when using a standard autoencoder over two inverse problems: the inpainting of missing pixels and the image super-resolution. We then move on, in Section~\ref{section:pnp-celebA}, to a more real-world setting: the CelebA dataset, where we use a DRUNET denoiser as the DPP. Our experiments show that SOR yields a faster recovery with minimal deterioration of the PSNR. It also gives solutions which are robust to very challenging conditions with respect to the noise and measurement matrix (eg, with many missing pixels).

\subsection{Related Work:} 

Projected gradient descent algorithms for inverse problems have been studied in multiple contexts. For compressed sensing and sparsity models, iterative hard thresholding has been proposed ~\cite{jain2014iterative}. It uses the exact orthogonal projection and converges linearly to a stable estimation of $\hat{x}$ under a restricted isometry property. For more general low-dimensional models, the linear convergence of this algorithm has been shown in \cite{blumensath2011sampling} under this same property. \cite{golbabaee2018inexact} considers approximated projection with associated convergence theory. We focus here on PGD with a generalized notion of projection (which is not necessarily orthogonal). \cite{traonmilin2024towards} shows that this linear convergence is still valid under a RIP and a restricted Lipschitz property over the projection operator. It is observed experimentally that plug-and-play methods using generalized PGD iterations exhibit this linear convergence. 

Using deep neural networks to solve an inverse problem has been widely explored. Some techniques rely on training a precise deep neural network to solve an inverse problem without specifically knowing the model (\ref{eq:eq_pb_inverse}), \cite{dong2014learning, zhang2018ffdnet} or via unrolling of an optimization method \cite{diamond2017unrolled}. The problem with these techniques is that the associated deep neural networks must be retrained if the measurement operator changes. Other methods use them within an existing optimization algorithm. \cite{scarlett2022theoretical} and \cite{ongie2020deep} give an overview of several techniques with some associated theoretical results. It is in particular possible to use, in a PGD algorithm, explicit projections via autoencoders \cite{solving_inv_p}, or generative models such as Variational Autoencoders \cite{gonzalez2022solving} or GANs \cite{shah2018solving}. On the other hand, Plug and play (PnP) has become widely used in recent years as it uses deep neural network denoisers to perform the iteration of an optimization algorithm. The pretrained denoisers can be used in several optimization algorithms. This flexibility justifies the “Plug and play” terminology. Methods such as PnP-ADMM \cite{venkatakrishnan2013plug,chan2016plug}, PnP-Primal Dual \cite{meinhardt2017learning} , PnP-Half Quadratic Splitting \cite{zhang2021plug}, and PnP-(Fast) Shrinkage Thresholding algorithm (FISTA) \cite{sun2019online} have been proposed. Other works link denoisers with a specific variational prior \cite{romano2017little,cohen2021regularization,renaud2024plug}. 

Diffusion-based methods for image restoration have been widely explored recently. They use the reverse diffusion process to solve a given inverse problem. For instance, \cite{kadkhodaie2021stochastic} considers the prior provided by a trained denoiser. Other methods \cite{chung2022come,chung2022improving,chung2022diffusion,kawar2022denoising,delbracio2023inversion} rely on a diffusion model to recover the original image by taking into account the data consistency during this process. Other generative models such as flow-matching have been considered as priors~\cite{zhang2024flow}.  \cite{martin2024pnp} implemented flow priors as a plug-and-play approach  that alternates between gradient step, projection on flow trajectories, and denoising to recover the original image. Broader overviews are proposed in \cite{he2025diffusion, daras2024survey}.  Note that these methods rely on stochastic algorithms that are out of the scope of our theoretical setting. In this paper, we specifically focus on PnP methods that can be described by deterministic GPGD iterations such as proximal gradient methods (PGM) in \cite{kamilov2017plug}. Linear convergence of GPGD has been proven under restrictive hypotheses in \cite{liu2021recovery} that have been weakened in \cite{traonmilin2024towards}.

Hypotheses guaranteeing convergence of GPGD are not enforced during training with classical MSE loss. Hence, to obtain such guarantees, the loss can be penalized to have these desired properties. For example, several articles propose to control the global Lipschitz condition of neural networks. \cite{gouk2021regularisation} uses a projected stochastic gradient method to optimize a neural network to have a Lipschitz constant for each layer, whereas \cite{miyato2018spectral} proposes to normalize the spectral norm of each layer's weights, ensuring that the Lipschitz constant of the network is upper bounded by one. \cite{ryu2019plug} uses this last proposed regularization for PnP ADMM and PnP forward-backward splitting and proves experimentally the convergence of these algorithms under the Lipschitz condition of the denoisers (achieved by the regularization).\cite{pesquet2021learning} penalizes the norm  of the gradient of a learned operator to approximate the resolvent of a maximally monotone operator.  \cite{hurault2022proximal} trains a denoiser as a gradient descent step, ensuring the convergence of these PnP algorithms in this context.  The Stochastic Orthogonal Regularization loss proposed in this paper encourages a DPP to approximate an orthogonal projection to constrain the function to have a (weaker than global) restricted Lipschitz constant that is still sufficient to improve the rate of convergence. Note that it must not be confused with the orthogonalization of weights of the convolution matrices as in \cite{liu2017deep,wang2020orthogonal} nor the orthogonalization of feature space of the neural network as in \cite{ranasinghe2021orthogonal,wang2019clustering}: it does not need to act directly on the neural network parameters.

\section{Stochastic orthogonal regularization and linear convergence of generalized projected gradient descent}\label{section:theory}
We recall in this section key results for the convergence of generalized projected gradient descent. We show that an approximate orthogonal projection has a restricted Lipschitz constant approximating the constant of the orthogonal projection, i.e. the main hypothesis for convergence is verified for approximate orthogonal projections. These results motivate us to propose our stochastic orthogonal regularization scheme. All proofs are given in the appendix.
\subsection{Definitions}

The quality of a measurement matrix is assessed with a \emph{restricted isometry constant}: 

\begin{definition}\label{def:RIC}
	An operator $\mathbf{B}$ has Restricted Isometry Constant (RIC) $\delta <1$ on the secant set $\Sigma-\Sigma =\{x_1-x_2 : x_1,x \in \Sigma \}$ if for all $x_1,x_2 \in \Sigma$, we have
	
	\begin{equation}
		\|(\mathbf{I}-\mathbf{B})(x_1-x_2)\|_2\leq \delta \|x_1-x_2\|_2
	\end{equation}
	We write $\delta_\Sigma(\mathbf{B})$ the smallest admissible restricted isometry constant (RIC).
\end{definition}

In this article, we will consider RIC $\delta_\Sigma(\gamma \mathbf{A}^T \mathbf{A})$ i.e. the restricted conditioning of $\mathbf{I}-\gamma \mathbf{A}^T \mathbf{A}$, where $\gamma$ is the GPGD step size.
Having a restricted isometry constant (RIC) implies that $\mathbf{A}$ has a restricted isometry property, a typical hypothesis in compressive sensing and sparse recovery (see \cite{Foucart_2013} for an overview).

We introduce our notions of generalized projection and an orthogonal projection. 

\begin{definition}[Generalized projection]
	Let $\Sigma \subset \mathbb{R}^n$. A generalized projection onto $\Sigma$ is a (set-valued) function $P$ such that for any $z\in\bR^N$, $P(z)\subset \Sigma$.
\end{definition}

To simplify notations, $P(z)$ refers to any $u\in P(z)$. While projections realized by deep neural networks usually return a single value, in sparse recovery this is not the case.

\begin{definition}[Orthogonal projection]\label{def:orth_proj}
	We define the orthogonal projection onto a set $\Sigma \subset \mathbb{R}^n$ as follows: for all $z\in\mathbb{R}^n$
	\begin{equation}\label{eq:def_orth_proj}
		P_\Sigma^\perp(z) = \arg \min_{x\in\Sigma} \| x-z \|_2.
	\end{equation}
\end{definition}

We will suppose in the following that $\Sigma$ is a proximinal set. A proximinal set is a set for which the orthogonal projection can always be defined through minimization~\eqref{eq:def_orth_proj}, i.e. $P_\Sigma^\perp(z) \neq \emptyset$ for all $z \in\mathbb{R}^n$. We introduce the restricted Lipschitz condition of a generalized projection onto the model set $\Sigma$. 

\begin{definition}[Restricted Lipschitz property]\label{def:lip_const}
	Let $P$ be a generalized projection. Then $P$ has the restricted $\beta$-Lipschitz property with respect to $\Sigma$ if for all $z \in \mathbb{R}^n, x \in \Sigma, u \in P(z) $ we have
	\begin{equation}
		\begin{split}
			\|u-x\|_2 &\leq \beta \|z-x\|_2\\
		\end{split}
	\end{equation}
	We denote by $\beta_{\Sigma}(P)$ the smallest $\beta$ such that $P$ has a restricted $\beta$-Lipschitz property.
\end{definition}

This is, as its name implies, a constrained version of the general Lipschitz condition where $x$ is restricted to $\Sigma$ (as $P(x)=x$ if $\beta < +\infty$). Note that, while orthogonal projections on proximinal sets are restricted $2$-Lipschitz \cite{traonmilin2024towards}, they are not globally Lipschitz in general (e.g. when $\Sigma = \Sigma_k$ the set of $k$-sparse vectors and $P_\Sigma^\perp$ is the hard thresholding operator). 

The RIC and the restricted $\beta$-Lipschitz condition are sufficient conditions for the linear convergence of GPGD. We extend the convergence result of \cite[Theorem 2.1]{traonmilin2024towards} that was given in the noiseless case to the noisy observation model \eqref{eq:eq_pb_inverse}. 

\begin{theorem}[Stable linear recovery of low-dimensional models]\label{th:gen_convevergence}
	Consider the observation model (\ref{eq:eq_pb_inverse}).
	Let $\Sigma \subset \mathbb{R}^n$. Suppose $\gamma \mathbf{A}^T\mathbf{A}$ has a RIC $\delta:=\delta_\Sigma(\gamma \mathbf{A}^T\mathbf{A})$. Suppose $P_\Sigma$ is a generalized projection with the restricted $\beta$-Lipschitz property with respect to $\Sigma$ (with $\beta := \beta_\Sigma(P_\Sigma)$). Suppose $\delta\beta <1$.
	For any $\hat{x} \in \Sigma, x_0 \in\mathbb{R}^n$, consider the iterates $x_n$ resulting from GPGD iterations~\eqref{eq:pgd}, we have 
	\begin{equation}
		\begin{split}
			\|x_{n} -\gt\|_2& \leq  (\delta \beta)^{n}\|x_0-\gt\|_2+ \gamma \left(\sum_{i=0}^{n-1}(\delta \beta)^{i}\right) \| \mathbf{A}^T e\|_2 \\
			& \leq (\delta \beta)^{n}\|x_0-\gt\|_2 + \frac{\gamma}{1-\delta \beta}\| \mathbf{A}^T e\|_2.
		\end{split}
	\end{equation}	
\end{theorem}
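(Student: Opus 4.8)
The plan is to reduce the entire statement to a single one-step contraction inequality of the form $\norm{x_{k+1}-\gt}{2} \leq \delta\beta\,\norm{x_k-\gt}{2} + \gamma\norm{\mathbf{A}^T e}{2}$, and then to unroll this recursion. The core algebraic step is to rewrite one GPGD iteration as an affine map applied to a secant vector, plus an additive noise term.

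First I would set $u_k := P_\Sigma(x_k) \in \Sigma$ and substitute the observation model $y = \mathbf{A}\gt + e$ into the GPGD update~\eqref{eq:pgd}. Subtracting $\gt$ from both sides and collecting the terms in $\mathbf{A}^T\mathbf{A}$, a single step becomes
\[
	x_{k+1} - \gt = (\mathbf{I} - \gamma\mathbf{A}^T\mathbf{A})(u_k - \gt) + \gamma\mathbf{A}^T e .
\]
The noise enters only through the additive term $\gamma\mathbf{A}^T e$, which is precisely the source of the stability term in the final bound. Applying the triangle inequality then cleanly separates the deterministic and noise contributions.

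For the deterministic part, since both $u_k = P_\Sigma(x_k)$ and $\gt$ lie in $\Sigma$, the difference $u_k - \gt$ belongs to the secant set $\Sigma - \Sigma$, so the RIC of $\gamma\mathbf{A}^T\mathbf{A}$ (Definition~\ref{def:RIC}, with $\mathbf{B} = \gamma\mathbf{A}^T\mathbf{A}$) yields $\norm{(\mathbf{I} - \gamma\mathbf{A}^T\mathbf{A})(u_k - \gt)}{2} \leq \delta\,\norm{u_k - \gt}{2}$. I would then invoke the restricted $\beta$-Lipschitz property (Definition~\ref{def:lip_const}, taking $z = x_k$, $x = \gt \in \Sigma$, and $u = u_k \in P_\Sigma(x_k)$) to obtain $\norm{u_k - \gt}{2} \leq \beta\,\norm{x_k - \gt}{2}$. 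Chaining these two estimates produces the one-step bound $\norm{x_{k+1}-\gt}{2} \leq \delta\beta\,\norm{x_k-\gt}{2} + \gamma\norm{\mathbf{A}^T e}{2}$, with contraction factor $\delta\beta < 1$ by hypothesis.

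Finally, writing $a_k := \norm{x_k-\gt}{2}$, the recursion $a_{k+1} \leq \delta\beta\,a_k + \gamma\norm{\mathbf{A}^T e}{2}$ unrolls by a trivial induction on $n$ to $a_n \leq (\delta\beta)^n a_0 + \gamma\left(\sum_{i=0}^{n-1}(\delta\beta)^i\right)\norm{\mathbf{A}^T e}{2}$, which is the first displayed inequality; bounding the partial geometric sum by its limit $1/(1-\delta\beta)$, valid precisely because $\delta\beta < 1$, gives the second. I do not anticipate a serious obstacle: the argument is a standard contraction/fixed-point iteration. The only points requiring genuine care are verifying that $u_k - \gt$ lies in the secant set so that the RIC legitimately applies, and respecting the set-valued convention for $P_\Sigma$ — each inequality must hold for an \emph{arbitrary} selection $u_k \in P_\Sigma(x_k)$, which is exactly the form in which the restricted Lipschitz property is stated.
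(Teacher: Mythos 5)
Your proposal is correct and follows essentially the same route as the paper's proof: rewrite one GPGD step as $(\mathbf{I}-\gamma\mathbf{A}^T\mathbf{A})(P_\Sigma(x_k)-\gt)+\gamma\mathbf{A}^Te$, apply the triangle inequality, the RIC on the secant vector, and the restricted $\beta$-Lipschitz property to get the one-step contraction, then unroll by induction and bound the geometric sum. The only (minor) addition is your explicit attention to the set-valued selection $u_k\in P_\Sigma(x_k)$, which the paper leaves implicit.
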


For a fixed measurement matrix $\mathbf{A}$ and step size $\gamma$, the convergence rate and the stability to noise depend solely on the restricted $\beta$-Lipschitz constant. For increased convergence rate and stability to noise, the rate parameter $r=\delta\beta$ must be lowered. Hence, a lower $\beta$ will improve convergence guarantees. This shows in particular the importance of choosing a suitable projection onto $\Sigma$. The orthogonal projection is a good candidate as its Lipschitz constant is bounded $\beta_\Sigma(P_\Sigma^\perp) \leq 2$. In the case of sparse recovery it was shown that $\beta_\Sigma(P_\Sigma^\perp) \leq \sqrt{\frac{3 + \sqrt{5}}{2}} \approx 1.618$ and that this restricted Lipschitz constant is near-optimal \cite{traonmilin2024towards}. 

Our objective is to build deep projective priors that approximate orthogonal projections in order to have good control of the restricted Lipschitz constant. We formalize this idea in the next section.

\subsection{Restricted Lipschitz constants of approximate orthogonal projections }

 We now introduce quantities that control the difference between a given generalized projection and the orthogonal projection.

 \begin{theorem}\label{theorem:general_rlc}
 	Let $\Sigma$ be a proximinal set. Let $\beta^\perp := \beta_\Sigma (P_\Sigma^\perp)$. Let $P$ be a generalized projection onto $\Sigma$. Let $L\geq 0$ such that:
 	$\norm{P(z)-\orth(z)}{2} \leq L \norm{z-\orth(z)}{2}$ for all $z\in\bR^n$. Then $P$ has a restricted Lipschitz constant $\beta=\beta^{\perp}+L$. 
 \end{theorem}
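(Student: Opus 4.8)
The goal is to verify the restricted Lipschitz property of Definition~\ref{def:lip_const} for $P$ with constant $\beta = \beta^\perp + L$: namely, to show that for every $z\in\bR^n$, every $x\in\Sigma$, and every $u\in P(z)$ we have $\norm{u-x}{2}\leq(\beta^\perp+L)\norm{z-x}{2}$. The plan is to insert the orthogonal projection $\orth(z)$ as an intermediate point and apply the triangle inequality, writing $u-x=(u-\orth(z))+(\orth(z)-x)$, so that
\begin{equation}
	\norm{u-x}{2}\leq \norm{u-\orth(z)}{2}+\norm{\orth(z)-x}{2}.
\end{equation}
Each of the two resulting terms is then controlled by one of the available hypotheses.

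For the second term, I would invoke directly the restricted Lipschitz property of the orthogonal projection: since $\beta^\perp=\beta_\Sigma(\orth)$ and $x\in\Sigma$, Definition~\ref{def:lip_const} applied to $\orth$ gives $\norm{\orth(z)-x}{2}\leq \beta^\perp\norm{z-x}{2}$ (this is well defined because $\Sigma$ is proximinal, so $\orth(z)\neq\emptyset$). For the first term, I would use the defining hypothesis on $L$: interpreting $P(z)$ as the arbitrary element $u\in P(z)$, we have $\norm{u-\orth(z)}{2}\leq L\,\norm{z-\orth(z)}{2}$.

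The one genuine step, rather than pure bookkeeping, is to convert the factor $\norm{z-\orth(z)}{2}$ appearing in the bound on the first term into the desired reference distance $\norm{z-x}{2}$. Here I would use the optimality that defines the orthogonal projection in Definition~\ref{def:orth_proj}: $\orth(z)$ minimizes $\norm{\,\cdot\,-z}{2}$ over $\Sigma$, hence for any competitor $x\in\Sigma$ one has $\norm{z-\orth(z)}{2}\leq\norm{z-x}{2}$. This is the crux of the argument and the place where proximinality and the minimality of $\orth$ are essential; everything else is the triangle inequality. Substituting yields $\norm{u-\orth(z)}{2}\leq L\,\norm{z-x}{2}$.

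Combining the two bounds gives $\norm{u-x}{2}\leq L\,\norm{z-x}{2}+\beta^\perp\,\norm{z-x}{2}=(\beta^\perp+L)\norm{z-x}{2}$, valid for all $z$, $x\in\Sigma$, and $u\in P(z)$, which is exactly the restricted $(\beta^\perp+L)$-Lipschitz property. Since this holds for every admissible $u$, it bounds $\beta_\Sigma(P)$ and establishes the claim. I do not anticipate any real obstacle beyond being careful that the set-valued reading of $P(z)$ is handled uniformly over all selections $u\in P(z)$, which the argument already respects.
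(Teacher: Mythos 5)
Your proof is correct and follows essentially the same route as the paper's: the same triangle-inequality decomposition through $\orth(z)$, the same use of $\beta^\perp$ to bound $\norm{\orth(z)-x}{2}$, and the same appeal to the minimality of the orthogonal projection to replace $\norm{z-\orth(z)}{2}$ by $\norm{z-x}{2}$. Your explicit handling of the set-valued selections $u\in P(z)$ is a minor point of added care that the paper leaves implicit.
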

 
 This Theorem implies that $\beta_\Sigma(P) = \beta^{\perp} $ for $P=\orth$ and $\beta_\Sigma(P)$ increases as $P$ deviates from $\orth$ \textit{i.e.} if there exists a $z\in\mathbb{R}^n$ such that $\norm{P(z)-\orth(z)}{2}^2>>\norm{z-\orth(z)}{2}^2$. Hence, even if we do not have explicit access to an orthogonal projection, we can still ensure a linear convergence rate that can be improved by decreasing $L$ with the convergence rate parameter 
 \begin{equation}
 	r= \delta\left(\beta^{\perp}+L\right).
 \end{equation} 
 
 This is achieved by minimizing the constant $L$. Unfortunately, $L$ depends on $\orth$, which is generally unknown in a deep learning setting. In Theorem \ref{theorem:eta_L}, we propose to address this issue by decomposing $L$ into two constants, such that one of these constants is independent of $\orth$. We suppose in particular that $\Sigma$ is a homogeneous model \textit{i.e.} for all $x\in\Sigma$ and $c\in\bR$, $cx\in\Sigma$, that commonly models images in the context of compressive sensing and sparse recovery (\textit{eg.} modifying the brightness or the contrast of an image does not change its nature and the object it represents.)

 For a given generalized projection $P$ onto $\Sigma$, let us define the function $\psi_P:\mathbb{R}^n\rightarrow \mathbb{R}^+$:
 \begin{equation}\label{eq:psi}
 	\psi_P(z):=\frac{|\langle P(z),z-P(z)\rangle|}{\norm{P(z)}{2}\norm{z-P(z)}{2}}.
 \end{equation}
 This function quantifies the orthogonality between the projection $P$ of a given point $z \in \mathbb{R}^n$ and the projection direction: consider the function $\alpha(x,y):=\langle \frac{x}{\norm{x}{2}}, \frac{y}{\norm{y}{2}} \rangle$ which is the cosine of the angle between vectors $x$ and $y$, then $	\psi_P(z) = \alpha(P(z),z-P(z)) $.
 
 \begin{theorem}\label{theorem:eta_L}
 	Let $\Sigma$ be a homogeneous proximinal set and $P$ a generalized projection onto $\Sigma$ such that $P(z) = z$ for $z \in \Sigma$. Let $L':=\underset{z\in\mathbb{R}^n\setminus \Sigma}{\sup}\left(\frac{\norm{\orth(z)-P(z)}{2}}{\norm{\orth(z)-z}{2}}\right)$, $\Psi_P:=\underset{z\in\mathbb{R}^n\setminus \Sigma}{\sup} \psi_P(z)$ and 
 	$\Phi_P:= \underset{z\in\mathbb{R}^n\setminus \Sigma}{\sup}\left(\frac{2\sqrt{1-\alpha(P(z),\orth(z))^2}}{1-\alpha(z,\orth(z))^2}\right)^{\frac{1}{2}}$. Suppose $\Psi_P^2+\Phi_P^2<1$. Then the generalized projection $P$ is $\beta^\perp + L'$ restricted Lipschitz with
 	\begin{equation}
 		L'\leq \frac{\Psi_P}{\sqrt{1-\Psi_P^2-\Phi_P^2}} + \Phi_P.
 	\end{equation}

 \end{theorem}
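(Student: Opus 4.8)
The plan is to reduce the statement to a single pointwise estimate and then invoke the previous theorem. Since $L'$ is by definition $\sup_{z\in\bR^n\setminus\Sigma}\|P(z)-\orth(z)\|_2/\|z-\orth(z)\|_2$, the claim that $P$ is $\beta^\perp+L'$ restricted Lipschitz is immediate from Theorem~\ref{theorem:general_rlc} applied with $L=L'$; the real content is the upper bound on $L'$. So I fix $z\in\bR^n\setminus\Sigma$, write $p:=P(z)$ and $q:=\orth(z)$, and aim to show $\|p-q\|_2\le\big(\frac{\Psi_P}{\sqrt{1-\Psi_P^2-\Phi_P^2}}+\Phi_P\big)\|z-q\|_2$; taking the supremum over $z$ then finishes. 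The first ingredient I would establish is the orthogonality of the exact projection onto the cone: because $\Sigma$ is homogeneous the whole line $\{tq:t\in\bR\}$ lies in $\Sigma$, so minimality of $q$ along this line gives the first-order condition $\langle q,z-q\rangle=0$. This yields the Pythagorean identity $\|z\|_2^2=\|q\|_2^2+\|z-q\|_2^2$ and, in particular, $1-\alpha(z,q)^2=\|z-q\|_2^2/\|z\|_2^2$, which is exactly the denominator appearing in $\Phi_P$; it also shows $q$ is the foot of the perpendicular from $z$ onto the line $\bR q$.

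Next I would introduce the foot of the perpendicular from $z$ onto the line $\bR p$, namely $\tilde p:=\frac{\langle z,p\rangle}{\|p\|_2^2}p$, which again lies in $\Sigma$ by homogeneity, and split $\|p-q\|_2\le\|p-\tilde p\|_2+\|\tilde p-q\|_2$. A direct computation from the definition~\eqref{eq:psi} gives $\|p-\tilde p\|_2=\psi_P(z)\|z-p\|_2$. For the second term I would reduce to the plane $\spn\{p,q\}$: both $\tilde p$ and $q$ lie in it, and each is the foot of a perpendicular from $z$ onto a line ($\bR p$ and $\bR q$ respectively), so $\|\tilde p-q\|_2$ is determined by the in-plane part $w$ of $z$. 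Working in an orthonormal frame of that plane one gets $\|\tilde p-q\|_2=\sin\theta\,\|w\|_2$ with $\theta=\angle(p,q)$. Since $\sqrt{1-\alpha(p,q)^2}=\sin\theta$ and $\|w\|_2\le\|z\|_2$, combining with the identity for $1-\alpha(z,q)^2$ above and the elementary bound $\sin\theta\le\sqrt{2\sin\theta}$ shows $\|\tilde p-q\|_2\le\Phi_P\|z-q\|_2$, which produces the additive $\Phi_P$ term.

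The crux is to control $\|p-\tilde p\|_2=\psi_P(z)\|z-p\|_2$, i.e. to bound $\|z-p\|_2$ by $\|z-q\|_2$. Using $z-q\perp q$ and $z-\tilde p\perp\tilde p$ (the latter since $\tilde p\parallel p$), I would derive $\|z-p\|_2^2-\|z-q\|_2^2=\|q\|_2^2-\|\tilde p\|_2^2+\|\tilde p-p\|_2^2$. Here $\|\tilde p-p\|_2^2=\psi_P(z)^2\|z-p\|_2^2\le\Psi_P^2\|z-p\|_2^2$, while the planar reduction plus an elementary inequality (using $|\sin\theta|,|\cos\theta|\le1$ and $2|ab|\le a^2+b^2$) gives $\|q\|_2^2-\|\tilde p\|_2^2\le 2\sin\theta\,\|z\|_2^2\le\Phi_P^2\|z-p\|_2^2$, the last step using $\|z-p\|_2\ge\|z-q\|_2$ since $q$ is the nearest point. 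Summing the two estimates gives $\|z-q\|_2^2\ge(1-\Psi_P^2-\Phi_P^2)\|z-p\|_2^2$, so under the hypothesis $\Psi_P^2+\Phi_P^2<1$ we get $\|z-p\|_2\le\|z-q\|_2/\sqrt{1-\Psi_P^2-\Phi_P^2}$ and hence $\|p-\tilde p\|_2\le\frac{\Psi_P}{\sqrt{1-\Psi_P^2-\Phi_P^2}}\|z-q\|_2$. Adding the $\Phi_P\|z-q\|_2$ term and dividing by $\|z-q\|_2>0$ finishes the pointwise bound.

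I expect the main obstacle to be precisely this crux step: massaging $\|q\|_2^2-\|\tilde p\|_2^2$ into a quantity dominated by $\Phi_P^2\|z-p\|_2^2$, since this is where the somewhat unusual $\sqrt{1-\alpha(p,q)^2}$ (a sine rather than its square) in the definition of $\Phi_P$ must be reconciled with a squared-distance inequality; the slack $\sin\theta\le\sqrt{2\sin\theta}$ together with $\|z-p\|_2\ge\|z-q\|_2$ is exactly what makes the exponents match. The remaining work is bookkeeping: ensuring all normalized quantities are defined, i.e. treating the degenerate cases $p=0$ or $q=0$ separately (where the bound is trivial or the relevant ratios are vacuous), and checking that replacing the pointwise $\psi_P(z)$ and the pointwise value entering $\Phi_P$ by their suprema $\Psi_P$ and $\Phi_P$ is legitimate before taking the final supremum defining $L'$.
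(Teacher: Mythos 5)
Your proof is correct and follows essentially the same route as the paper's: your $\tilde p$ is exactly the paper's $P_{V_i}^{\perp}(z)$, and your triangle-inequality split together with the identity $\|z-p\|_2^2-\|z-q\|_2^2=\|q\|_2^2-\|\tilde p\|_2^2+\|\tilde p-p\|_2^2$ reproduces the paper's decomposition and its bound on $A(z)=1-\|z-q\|_2^2/\|z-p\|_2^2$ term for term. The only cosmetic difference is that you bound $\|\tilde p-q\|_2$ and $\|q\|_2^2-\|\tilde p\|_2^2$ by an in-plane inscribed-angle computation where the paper uses Cauchy--Schwarz on $\langle x_k\pm x_i,z\rangle$; both land on the same $\phi_P(z)^2$.
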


 This theorem states that we can control the constant $L$ (and consequently the restricted $\beta$-Lipschitz constant) by controlling the following quantities:
 \begin{itemize}
 	\item The constant $\Phi_P$ can be interpreted as the supremum over $z$ of an angular distance between the subspace spanned by $P(z)$ and the subspace spanned by $P_\Sigma^\perp(z)$. Indeed, the quantity $1-\alpha(P(z),\orth(z))^2 \approx 0$ when this angle is close to $0$.
 	\item The constant $\Psi_P$ can be interpreted as an intrinsic orthogonality parameter of the generalized projection $P$, i.e. $\Psi_P = 0$ implies that $\langle P(z), P(z)-z \rangle=0$ for all $z$. The knowledge of $\Sigma$ is not necessary to calculate the local orthogonality $\psi_P(z)$.
 \end{itemize}

 Most importantly, note that these quantities depend on the behavior of the generalized projection $P$ outside of the model set $\Sigma$. Hence, we cannot hope for a control of these quantities by minimizing typical losses~\eqref{eq:eq_ae}  acting only on a database of examples (approximately) in $\Sigma$. This will lead us to use images that are outside $\Sigma$, such as pure noise images, for the purposes of training. This is, in itself, quite novel in the literature.
 
 In the following, we propose a regularization of the training loss based on the definition of $\Psi_P$. We will show experimentally that the proposed regularization leads to improved projections in the context of GPGD for inverse problems. Note that in practice, with stochastic orthogonal regularization, we can have control of $\Psi_P \leq c < 1 $, with e.g. $c \approx 0.2$ in the MNIST case (Figure~\ref{fig:orth_values}), which fits the hypothesis of Theorem~\ref{theorem:eta_L} for $\Psi_P$. Controlling $\Phi_P$ (or showing that $\Phi_P$ is naturally controlled with our orthogonal regularization) during the training is left as an open question. 
 
 \subsection{Stochastic orthogonal regularization of deep projective priors}\label{section:section_ortho}
 
 We now present our stochastic orthogonal regularization (SOR) loss, used in the context of training a deep prior (e.g. autoencoders or denoising networks) on a database $X$. We suppose that the images of our dataset are in $[0,1]^n$. 
 
 We introduce the stochastic orthogonal regularization $\mathcal{R}$ as:
 \begin{equation}
 	\mathcal{R}(P):=\mathbb{E}_{z\sim\mathcal{U}([0,1]^n)} \left[\psi_P(z)\right]=\mathbb{E}_{z\sim\mathcal{U}([0,1]^n)}\left[\frac{|\langle P(z),z-P(z)\rangle|}{\norm{P(z)}{2}\norm{z-P(z)}{2}} \right].
 \end{equation}	
 
 We introduce the corresponding regularized losses for the training of deep projective priors based on autoencoders (AEs) or plug-and-play (PnP) priors: 
 \begin{equation}\label{eq:loss}
 	\begin{split}
 		\mathcal{L}_{X,\lambda,\text{AE}}^{\mathrm{reg}}(P)&=\mathcal{L}_{X,\text{AE}}(P) + \lambda \mathcal{R}(P);	\\
 		\mathcal{L}_{X,\lambda,\text{PnP}}^{\mathrm{reg}}(P)&=\mathcal{L}_{X,\text{PnP}}(P) + \lambda \mathcal{R}(P);	\\		
 	\end{split}		
 \end{equation}
 
 where $\lambda\in\mathbb{R}^+$ is a regularization parameter and $\mathcal{L}_{X,\text{AE}/\text{PnP}}$ are the classical Mean Squared Error (MSE) loss introduced in~\eqref{eq:eq_ae} and calculated over the training dataset $X$. Although we present SOR in the case of autoencoders and PnP denoisers, it can obviously be applied to any deep projective prior.
 
 To compute $\mathbb{E}_{z\sim\mathcal{U}([0,1]^n)} \left[\psi_P(z)\right]$, we use an empirical mean $\frac{1}{s}\sum_{i=1}^s \psi_P(z_i)$ (where $z_i \sim \mathcal{U}([0,1]^n)$) calculated using $s$ random points. Hence, we approximate the gradient of $\mathcal{R}$ with a stochastic gradient, making this orthogonal regularization \emph{stochastic}. The choice of a uniform distribution is justified by our desire to have a global orthogonal projection on the set containing all images. 
 In fact, apart from the close surroundings of $\Sigma$ where the MSE guarantees a good approximation of the orthogonal projection (as they minimize the same squared error), no constraint is applied over the rest of the space. Our choice assumes that images are represented in $[0,1]^n$ as common datasets for inverse problems are normalized to this range. Still, it is an open question to determine what is the optimal distribution for SOR. 
 
 To compute SOR, we generate at each epoch, and for each batch $X_s$ of size $s$ of the training dataset, another batch of the same size of random images $z_i \sim \mathcal{U}([0,1]^n)$. Suppose we train a projection $P_\theta$ parametrized by $\theta$, the SGD iterations can be written as: 
 \begin{equation}
 	\begin{split}
 		\theta_{k+1} &= \theta_k - \tau G(\theta_k) \\ 
 		G(\theta) &= \nabla_{\theta, X_s} \mathcal{L}_{X,\text{AE/PnP}}(P_
 		\theta) + \frac{1}{s}\sum_{i=1}^s \nabla_\theta \psi_{P_\theta}(z_i),
 	\end{split}
 \end{equation}
 where $\tau$ is the learning rate and $\nabla_{\theta,X_s} \mathcal{L}_{X,\text{AE/PnP}}(P_\theta)$ is the empirical mean of the gradient of the MSE loss over $X_s$. We consequently verify that $\mathbb{E} [G(\theta_k)] = \nabla_\theta 	\mathcal{L}_{X,\lambda,\text{AE/PnP}}^{\mathrm{reg}}(P_{\theta_k})$. Learning with SOR thus has the same convergence guarantees as conventional SGD learning schemes. In practice, we use the Adam algorithm for training the neural networks of the experiments section.
 
 With stochastic orthogonal regularization, we expect a better control of the restricted Lipschitz constant of deep projective priors. Note that, according to our framework, any improvement of the constant $\beta$ will result in improved identifiability and convergence speed guarantees. This is an improvement to frameworks that require operators modeled by neural networks to be contractive (global Lipshitz constant bounded by $1$).   However, the SOR term might degrade the MSE performance of the projective prior. We will observe experimentally that it is possible to improve the orthogonality of $P$ while maintaining similar MSE in the experiments. In the next section, we implement SOR for deep projective priors based on autoencoders and on plug-and-play. We use the resulting regularized projections for solving imaging inverse problems with GPGD.

	\section{Experimental results}\label{section:exp}

In this Section, we present results that demonstrate the improved convergence speed and robustness of GPGD for inverse problems when used with Deep Projective Priors trained with Stochastic Orthogonal Regularization (as predicted by improved restricted Lipschitz constants). We recall that the GPGD algorithm is given in Equation~\eqref{eq:pgd}, where the projection $P_\Sigma$ is performed with a DPP. We consider two types of DPP with two different databases: autoencoders with MNIST and plug-and-play DPP with a dataset of CelebA (human faces) images. We consider the following inverse problems

\begin{itemize}
	
	\item Image super-resolution: this consists in estimating a high-resolution image from a low-resolution image (see e.g.~\cite{yue2016image,wang2020deep} for an overview of methods for image-super resolution, we will focus on comparisons to deterministic PnP methods in the following). In equation \eqref{eq:eq_pb_inverse}, the vector $y\in \mathbb{R}^m$ is the subsampled image and the linear operator $\mathbf{A}\in \mathbb{R}^{m\times n}$ represents a low pass filtering followed by a subsampling, i.e. $\mathbf{A}=\mathbf{S}\mathbf{F}$ where $\mathbf{S}$ is the sub-sampling matrix by a given factor. The matrix $\mathbf{F}$ models the convolution with a Gaussian blur.

	\item Missing pixel inpainting: this consists in recovering an image from an observation where a certain proportion of randomly chosen pixels are missing (ie set to 0). The operator $\mathbf{A}$ from (\ref{eq:eq_pb_inverse}) is a matrix that selects the pixels to be deleted. We define the ``ratio'' as the proportion of pixels that will be deleted from the whole image. Note that this experiment is mainly used for testing our methodology (as we can easily control the conditioning of $\mathbf{A}$). However it could be linked with classical inverse problems such as impulse noise removal~\cite{chan2004iterative}.
	
	\item For CelebA images, we consider a deblurring task where $\mathbf{A}$ models a convolution with a Gaussian blur with kernel of size $5 \times 5$.
\end{itemize}
The main goal of this section is to compare the results obtained via a GPGD with and without SOR.  In the context of plug-and-play, GPGD without regularization of the denoiser can be interpreted as the baseline plug-and-play proximal gradient descent algorithm. For the particular case of the denoiser DPP on CelebA, we add the \textit{Regularization by denoising} (RED  \cite{romano2017little}) algorithm that is out of the theoretical scope of this paper, but still uses a projective prior. Hence, RED results are only to be used to help situate our results within the broader plug-and-play literature.

\textbf{Metrics:} To assess the results, we compare the Peak-Signal-to-Noise-Ratio (PSNR) of recovered images, \textit{i.e.} the solution of the inverse problem given by GPGD. We quantify the convergence speed by considering the first iteration $i$ such that $\frac{\|x_i-x^*\|_2}{\|x^*\|_2}\leq 0.01$ (0.005 for CelebA), where $x^*$ is the image given by the best of the 150 iterations of the GPGD \textit{i.e.} the one with the best PSNR.

The code to reproduce experiments is provided in \cite{joundi_code2024}.

\subsection{SOR of autoencoders DPP on MNIST}\label{section:ae}

As a proof of concept, we first solve in this subsection an inverse problem via GPGDs that use convolutional autoencoders regularized with SOR with different regularization parameters $\lambda$. They were trained during 2000 epochs with batches of size 64. Hence, at each iteration, a batch of size 64 of random images in $[0,1]^{28\times 28}$ is generated to train SOR. The size of the dataset is 30000. We put the precise details of the architectures in the Appendix \ref{section:architecture_reseaux}. In the appendix \ref{section:results_lambda}, we verify that SOR reduces the orthogonality criterion over different types of images. 

In the following inverse problems, the noise $e$ from the model (\ref{eq:eq_pb_inverse}) has a variance of $\sigma=0.02$

\noindent \textbf{Image super-resolution}	We perform super-resolution from images down-sampled by a factor of $2$. Figure \ref{fig:convergence_speed} represents a graph of PSNR recovery and convergence speed as a function of the weight $\lambda$. We complement the graph with visual results of the recovered image for 3 different $\lambda$: 0, 0.4, 5. We observe in the graph that the convergence is faster as $\lambda$ increases, while the PSNR remains stable until $\lambda \approx 0.4$. Thus, an autoencoder trained with SOR significantly accelerates the convergence of PGD algorithm with recovered images of equal quality.

\begin{figure}[t]
	\centering
	\begin{subfigure}{0.4\textwidth}
		\includegraphics[width=1\textwidth]{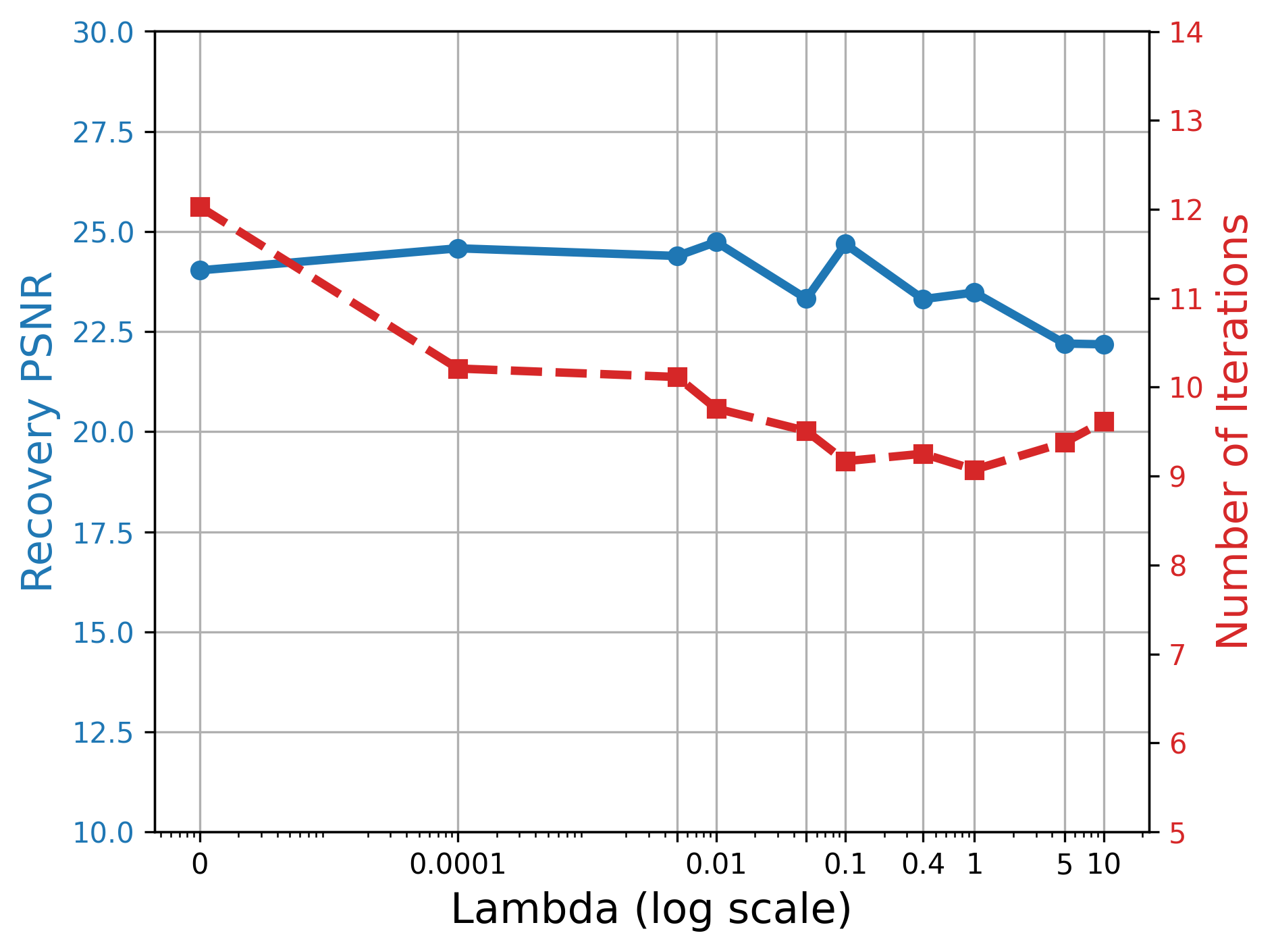}
		\caption*{Recovery PSNR and convergence speed as a function of the weight $\lambda$}\label{graph_supresmnist}
	\end{subfigure}
	\hspace{0.1cm}
	\begin{subfigure}{0.45\textwidth}
		\includegraphics[width=1\textwidth]{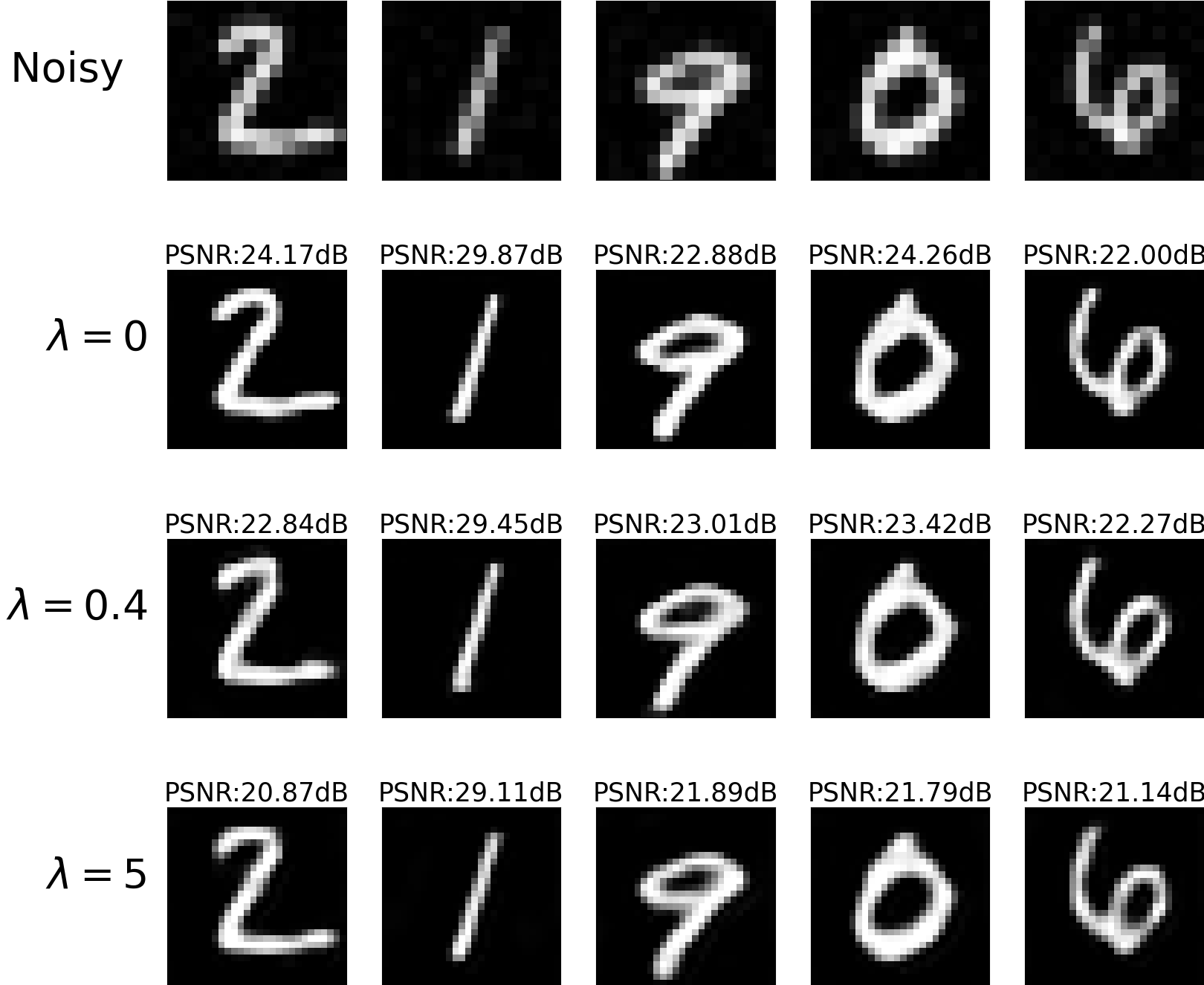}
		\caption*{Recovery of subsampled MNIST images through different PGDs.}
	\end{subfigure}
	\caption{Recovery of MNIST images from a super-resolution inverse problem with different PGDs using autoencoders weighted differently. The more lambda increases, the quicker the convergence, and this is without a significant loss of recovery PSNR. }\label{fig:convergence_speed}
\end{figure}

\noindent \textbf{Missing pixel inpainting}
For the inpainting problem, we test a wide range of ratios of missing pixels and compare the recovery of PGDs using a classical autoencoder ($\lambda=0$) and a regularized autoencoder ($\lambda=0.4$) that gives the best tradeoff between the MSE and SOR. Figure \ref{fig:Mask_random_MNIST} shows a graph of the PSNR as a function of the ratio and a visual result of an image recovery from MNIST.

\begin{figure}[!h]
	\centering
	\begin{subfigure}{0.4\textwidth}
		\includegraphics[width=1\textwidth]{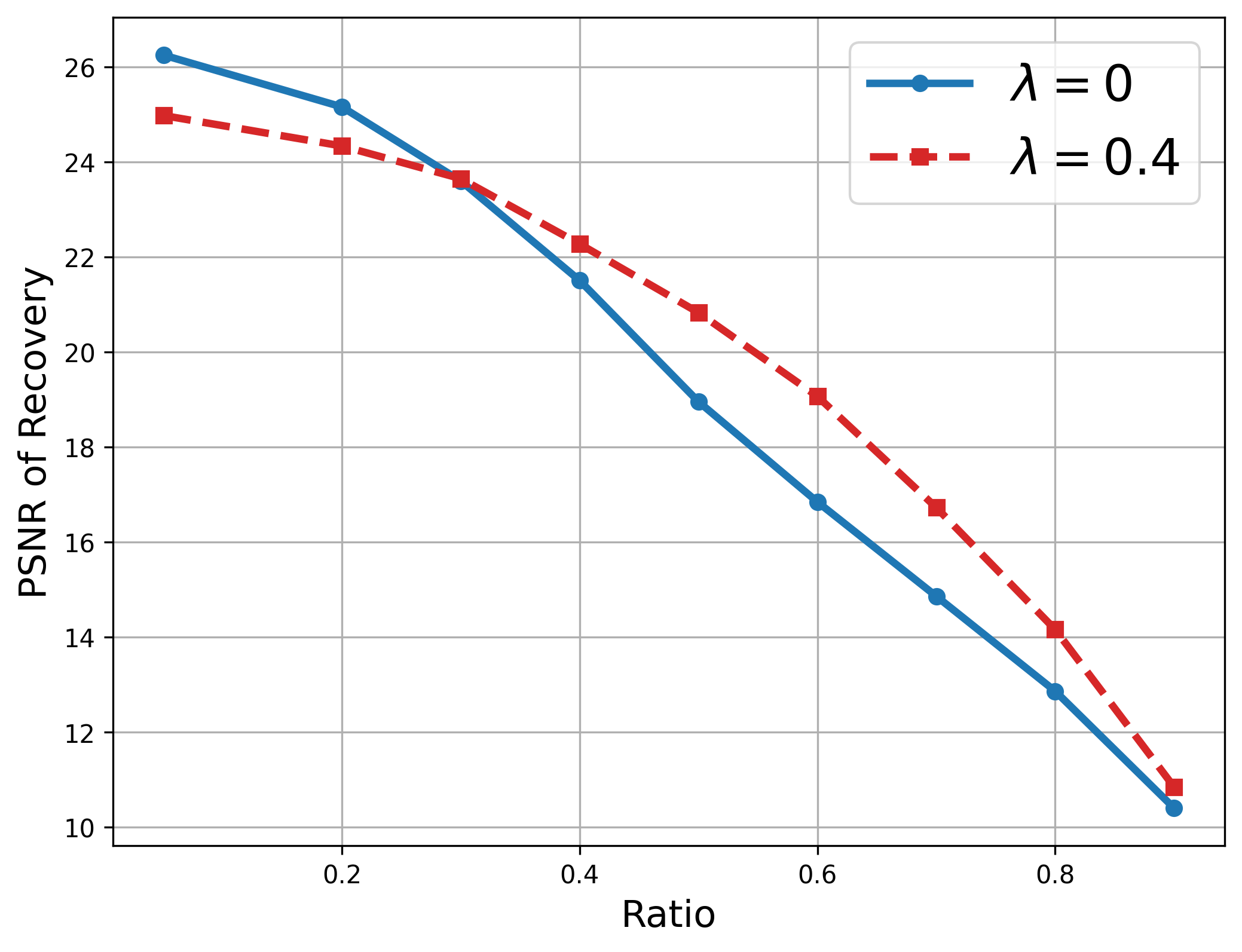}
		\caption*{Evolution of the mean PSNR of recovery as a function of the ratio of missing pixels}\label{fig:graph_MNIST_mask_random}
	\end{subfigure}
	\hspace{0.33cm}
	\begin{subfigure}{0.1565\textwidth}
		\includegraphics[width=1\textwidth]{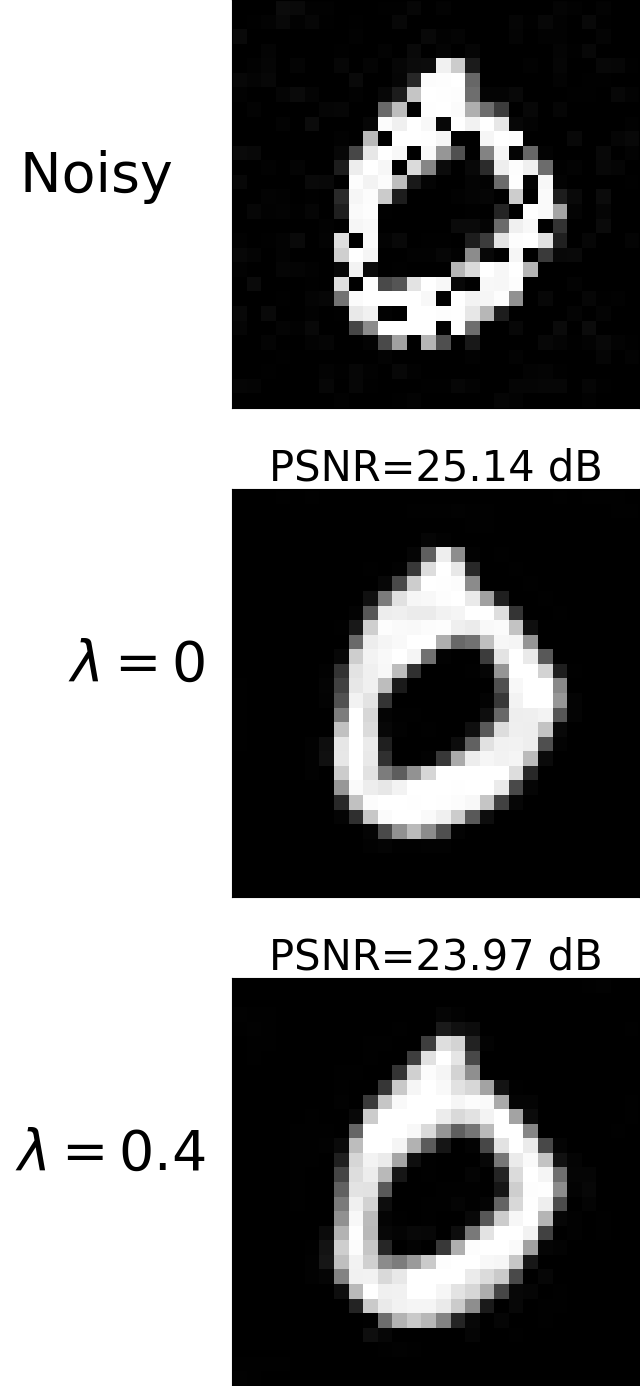}
		\caption*{Ratio: ~~~~0.2~~~~~} 
	\end{subfigure}
	\hspace{0.1cm}
	\begin{subfigure}{0.1\textwidth}
		\includegraphics[width=1\textwidth]{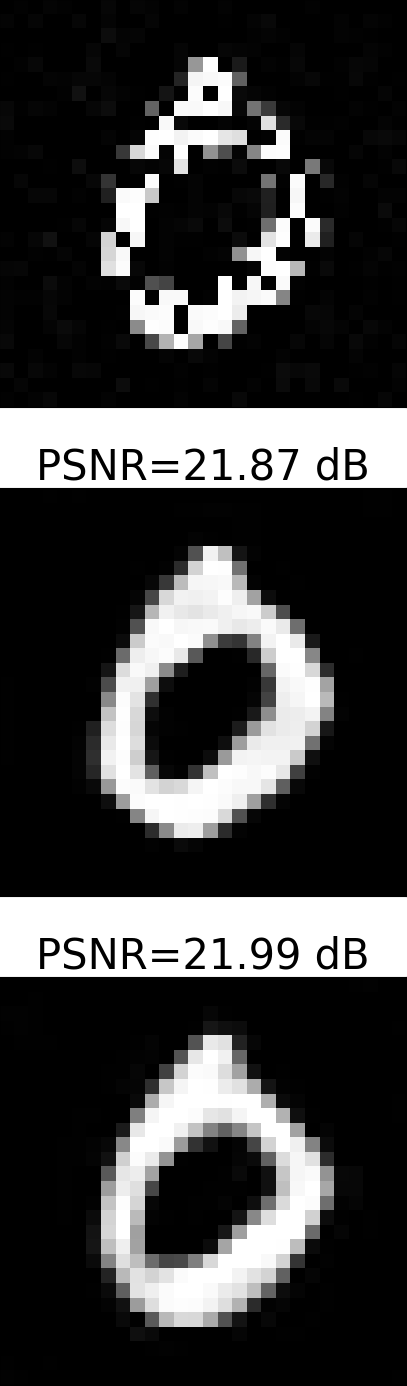}
		\caption*{0.4}
	\end{subfigure}
	\hspace{0.1cm}
	\begin{subfigure}{0.1\textwidth}
		\includegraphics[width=1\textwidth]{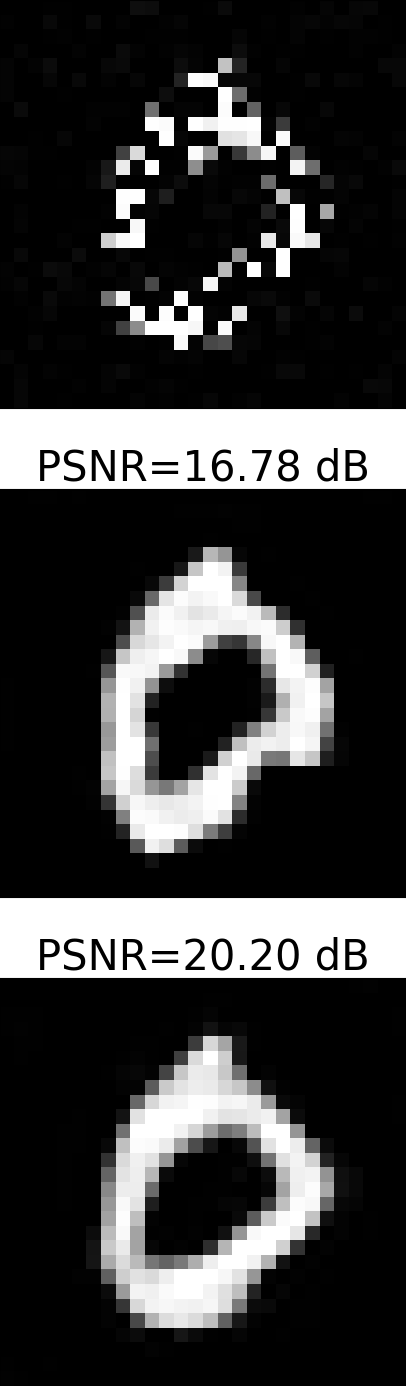}
		\caption*{0.6}
	\end{subfigure}
	\hspace{0.1cm}
	\begin{subfigure}{0.1\textwidth}
		\includegraphics[width=1\textwidth]{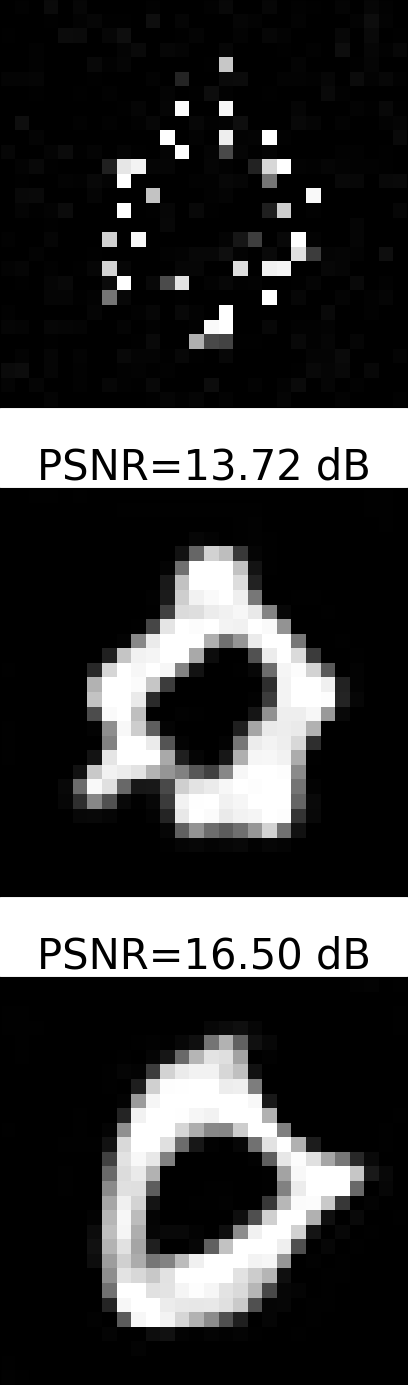}
		\caption*{0.8}
	\end{subfigure}	
	
	\caption{Graph and visual results of the evolution of the recovered MNIST image with the increase of the inpainting ratio for the different autoencoders. When using a regularized autoencoder, the recovery is more robust to the deterioration of the measurement matrix. }\label{fig:Mask_random_MNIST}
\end{figure}

The graph \ref{fig:Mask_random_MNIST} shows that the PSNR of the recovered images deteriorates when increasing the ratio, which is an expected behavior.  The basic autoencoder performs better when considering low inpainting ratios. In fact, it is explained by the observed images being near $\Sigma$. Then, autoencoding performance determines the quality of the reconstruction.   
However, it is outperformed by the regularized autoencoders in ratios higher than 0.4. This is in particular shown in Figure~\ref{fig:Mask_random_MNIST}: the regularized autoencoder keeps an acceptable shape of the 0 compared to the basic one that deteriorates more significantly. In other words, SOR loss for training the prior improved solutions in challenging circumstances (with many missing pixels) both numerically and visually. We attribute this to a lower restricted Lipschitz constant $\beta$ gained by the SOR. Indeed, when $\mathbf{A}$ is deteriorated, its RIC $\delta$ increases, hence a greater convergence rate $r=\delta \beta$ of GPGD (\ref{th:gen_convevergence}). A lower restricted Lipschitz constant $\beta$ compensates this increase of $r$ and also improves the stability constant (that is increasing with $\beta$, see Theorem~\ref{th:gen_convevergence}).

	\subsection{SOR of Plug and Play DPP on CelebA images}\label{section:pnp-celebA}

We now apply our SOR loss to another deep projective prior, deep denoising neural networks on a more realistic dataset. We train a DRUNET denoiser \cite{zhang2021plug}, a U-net combined with skip connections on a dataset of size 10000 of CelebA images \cite{liu2015faceattributes}. The images represent RGB face images of size 256x256. We considered for these two denoisers without SOR ($\lambda=0$) and with SOR ($\lambda=0.1$). We included in the appendix \ref{section:architecture_reseaux} details on the training and the architectures of the DRUNETs that were used. 

After the training, we solve three different inverse problems: inpainting, super-resolution and deblurring. We consider three image restoration algorithms: the classical unconstrained GPGD (noted in the Figures PGD), SOR, and Regularization by denoising. The main baseline here is PGD.
PGD and RED use the same non-regularized DRUNET, whereas SOR uses the regularized one.

For each inverse problem, we set different noise levels with the parameter $\sigma$. We consider for our experiments a test set of 50 images from CelebA. Additionally, a second metric is computed: Structural Similarity Index Measure (SSIM). Tables \ref{table:Inpainting_0.4}, \ref{table:Inpainting_0.6}, \ref{table:SUPRES_2}, \ref{table:SUPRES_3} and \ref{table:GBlur} present our results for the considered inverse problems. Visual results are provided to illustrate these results. Note that additional experiments are available in the appendix.

\subsubsection*{Inpainting}
\begin{table}[!h]
	\centering
	\begin{minipage}{0.48\textwidth}
		\centering
		\small
		\begin{tabular}{ccccc}
			\toprule
			$\sigma$           & Method & PSNR$\uparrow$  & SSIM$\uparrow$ & Iterations$\downarrow$\\
			\midrule
			\multirow{3}{*}{0}    & PGD    & 33.35 & 0.95 & 27.62          \\
			& SOR    & 35.39 & 0.95 & \textbf{11.58 }       \\
			& RED    & \textbf{36.66} & \textbf{0.96} & 38.64         \\
			\hline
			\multirow{3}{*}{0.02} & PGD    & \textbf{35.63} & \textbf{0.96} & 27.68        \\
			& SOR    & 34.93 & 0.95 & \textbf{10.42 }         \\
			& RED    & 34.64 & 0.92 & 30.96 \\
			\bottomrule      
		\end{tabular}
	\caption{Metrics of an inpainting ratio of 0.4: SOR is the fastest converging algorithm while providing competitive recovery results compared to the two other methods.}\label{table:Inpainting_0.4}
\end{minipage}
\hfill
\begin{minipage}{0.48\textwidth}
	\centering
	\small
	\begin{tabular}{ccccc}
		\toprule
		$\sigma$           & Method & PSNR$\uparrow$  & SSIM$\uparrow$ & Iterations$\downarrow$\\
		\midrule
		\multirow{3}{*}{0}    & PGD    & 19.78 & 0.56 & \textbf{4.92}         \\
		& SOR    & \textbf{32.46} & \textbf{0.93} & 18.64        \\
		& RED    & 29.58 & 0.83 & 37.18        \\
		\hline
		\multirow{3}{*}{0.02} & PGD    & 20.14 & 0.58 & \textbf{5.72}         \\
		& SOR    & \textbf{32.33} &\textbf{ 0.92} & 17.88         \\
		& RED    & 28.81 & 0.81 & 36.36\\
		\bottomrule  
	\end{tabular}
\caption{Metrics of an inpainting ratio of 0.6: SOR gives the best recovery results while still being quicker in both noiseless and noisy cases.}\label{table:Inpainting_0.6}
\end{minipage}

\end{table}

For the inpainting inverse problem. Tables \ref{table:Inpainting_0.4} and \ref{table:Inpainting_0.6} show that the more complex the problem is, the more SOR outperforms the two other methods. While it is slightly below RED in an inpainting of ratio 0.4 and $\sigma=0$, the recovery is improved when moving from an inpainting ratio of 0.4 to 0.6 or from a noise level of 0 to 0.02. This is what we observed for the MNIST case. We provide an additional inpainting ratio of 0.7 in the appendix, showing in particular that SOR is still outperforming in every metric.

\begin{figure}[!h]
\centering
\begin{minipage}{\textwidth}
\centering
\begin{subfigure}{0.23\textwidth}
	\includegraphics[width=1\textwidth]{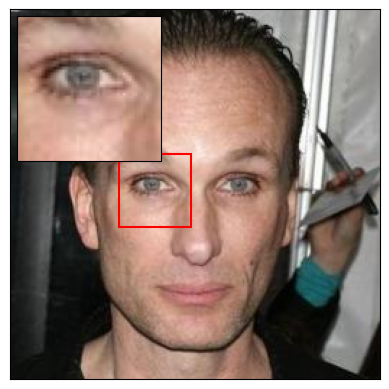}
	\caption{Original}
\end{subfigure}
\hspace{0.02cm}
\begin{subfigure}{0.23\textwidth}
	\includegraphics[width=1\textwidth]{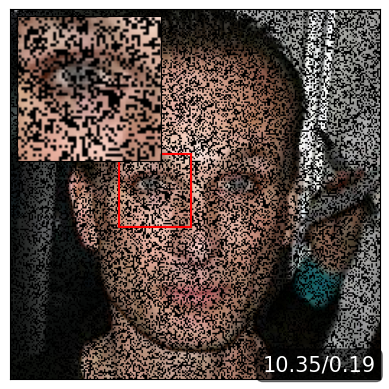}
	\caption{Observed}
\end{subfigure}
\hspace{0.02cm}
\begin{subfigure}{0.23\textwidth}
	\includegraphics[width=1\textwidth]{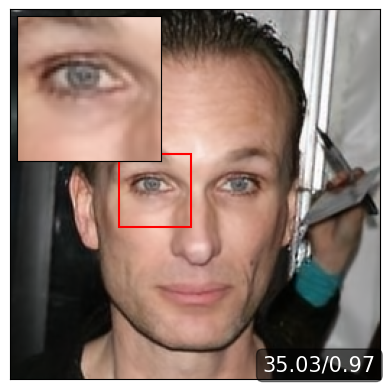}
	\caption{PGD}
\end{subfigure}

\begin{subfigure}{0.23\textwidth}
	\includegraphics[width=1\textwidth]{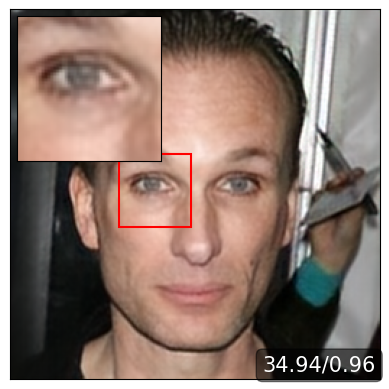}
	\caption{SOR}
\end{subfigure}
\hspace{0.02cm}
\begin{subfigure}{0.23\textwidth}
	\includegraphics[width=1\textwidth]{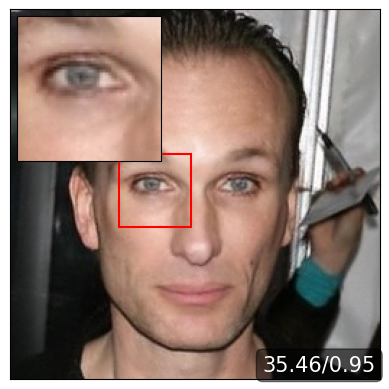}
	\caption{RED}
\end{subfigure}
\caption{Inpainting with the three considered methods of an image with an inpainting ratio of 0.4 and noise level of 0. RED recovers better the original image but is slightly smoothing the image compared to the GPGD methods}\label{fig:inpainting_0.4}
\end{minipage}
\vfill
\begin{minipage}{\textwidth}
\centering
\begin{subfigure}{0.23\textwidth}
	\includegraphics[width=1\textwidth]{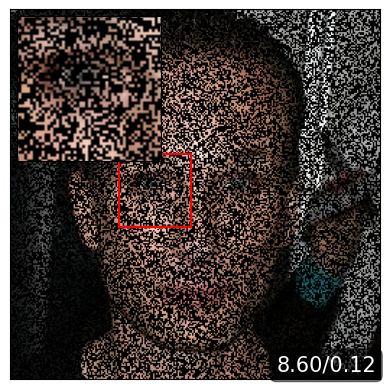}
	\caption{Observed}
\end{subfigure}
\hspace{0.02cm}
\begin{subfigure}{0.23\textwidth}
	\includegraphics[width=1\textwidth]{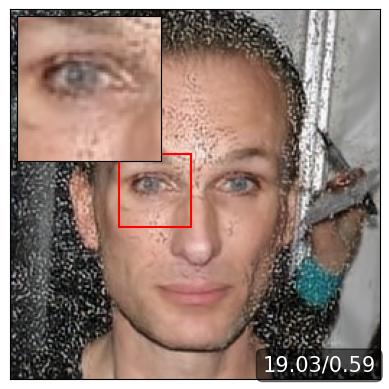}
	\caption{PGD}
\end{subfigure}
\hspace{0.02cm}
\begin{subfigure}{0.23\textwidth}
	\includegraphics[width=1\textwidth]{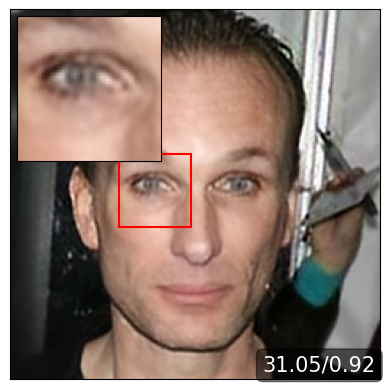}
	\caption{SOR}
\end{subfigure}
\hspace{0.02cm}
\begin{subfigure}{0.23\textwidth}
	\includegraphics[width=1\textwidth]{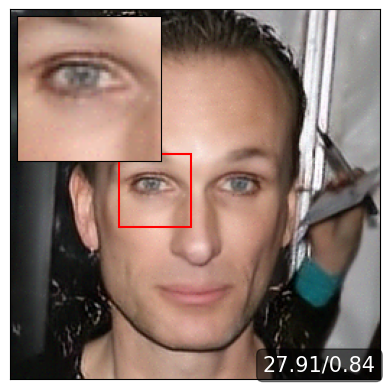}
	\caption{RED}
\end{subfigure}
\caption{Inpainting an image with an inpainting ratio of 0.6 and noise level of 0. While PGD produces many artifacts, SOR and RED correctly recover the original image with the same performance. Visually, SOR produces more realistic and less smoothed images compared to RED.}\label{fig:inpainting_0.6}
\end{minipage}
\end{figure}

Visually, Figures \ref{fig:inpainting_0.4} and \ref{fig:inpainting_0.6} show that while PGD creates lot of artifacts when going from an inpainting ratio of 0.4 to 0.6, SOR keeps a clean image. In addition to that, the obtained image preserves the textures compared to RED that tends to average the image. 

Furthermore, another important result can be seen from the Table \ref{table:Inpainting_0.4} and \ref{table:Inpainting_0.6}. When comparing the convergence speed of both methods, we observe clearly that, when the PSNRs are similar, SOR is, once again, clearly quicker than both PGD and RED ($2.5$ times faster than both methods for a ratio $0.4$). This confirms two things: first, GPGD algorithms converge faster compared to RED, which has been discussed in \cite{traonmilin2024towards}. Second, our regularization that aims to lower the restricted Lipschitz constant has, in fact, accelerated the convergence speed compared to PGD that is not regularized.

 \subsubsection*{Super-resolution}
\begin{table}[!h]
	\centering
	\begin{minipage}{0.48\textwidth}
		\centering
		\small
		\begin{tabular}{ccccc}
			\toprule
			$\sigma$          & Method & PSNR$\uparrow$  & SSIM$\uparrow$ & Iterations$\downarrow$\\
			\midrule
			\multirow{3}{*}{0.02}    & PGD    & \textbf{34.22} & \textbf{0.93} & 4.44         \\
			& SOR    & 32.87 & 0.91 & \textbf{3.0}          \\
			& RED    & 33.70 & 0.92 & 12.6         \\
			\hline
			\multirow{3}{*}{0.05} & PGD    & \textbf{31.45} & \textbf{0.88} & 8.18          \\
			& SOR    & 30.82 & 0.85 & \textbf{2.16}          \\
			& RED    & 30.90 & 0.86 & 15.66\\
			\bottomrule             
		\end{tabular}
		\caption{Metrics of a super-resolution of factor 2: PGD gives the best recovery results but still, SOR is the fastest converging algorithm.}\label{table:SUPRES_2}
	\end{minipage}
	\hfill
	\begin{minipage}{0.48\textwidth}
		\centering
		\small
		\begin{tabular}{ccccc}
			\toprule
			$\sigma$           & Method & PSNR$\uparrow$  & SSIM$\uparrow$ & Iterations$\downarrow$\\
			\midrule
			\multirow{3}{*}{0.02}    & PGD    & 25.71 & 0.80 & 28.96        \\
			& SOR    & \textbf{29.67} & \textbf{0.86} & \textbf{8.26}         \\
			& RED    & 24.77 & 0.75 & 29.3         \\
			\hline
			\multirow{3}{*}{0.05} & PGD    & 25.34 & \textbf{0.79} & 29.38         \\
			& SOR    & \textbf{28.20} & \textbf{0.79} & \textbf{7.58}         \\
			& RED    & 24.19 & 0.72 & 27.18 \\
			\bottomrule                  
		\end{tabular}
		\caption{Metrics of a super-resolution of factor 3: SOR gives the best recovery images while being the fastest among all the methods.}\label{table:SUPRES_3}
	\end{minipage}
\end{table}

The same observations can be made for the super-resolution inverse problem. Whereas SOR is slightly weaker in recovering images when the super-resolution factor is 2, it outperforms both PGD and RED when increasing this factor to 3. This is explained by the visual results in Figure \ref{fig:SUPRES_2} and \ref{fig:SUPRES_3}. We see that SOR creates less ringing effect compared to PGD and RED as seen in Figure \ref{fig:SUPRES_3}. This shows that SOR adds clearly a robustness to aliasing which can be modeled as a structured noise (again through the stability constant given by Theorem \ref{th:gen_convevergence}). We provide in the appendix experiments on MRI images that account for the same observations.

The increased convergence speed of SOR is still valid for the super-resolution. The particularity of this case is that, for a super resolution of factor 2, SOR can converge within very few iterations while obtaining competitive results compared to PGD and RED.

\begin{figure}[!h]
	\centering
	\begin{subfigure}{0.23\textwidth}
		\includegraphics[width=1\textwidth]{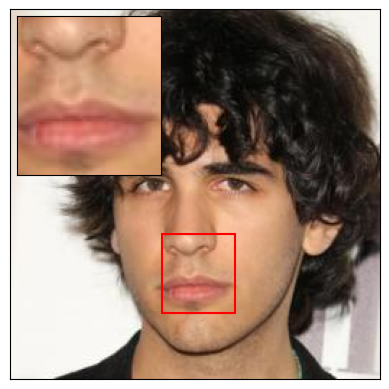}
		\caption{Original}
	\end{subfigure}
	\hspace{0.02cm}
	\begin{subfigure}{0.23\textwidth}
		\includegraphics[width=1\textwidth]{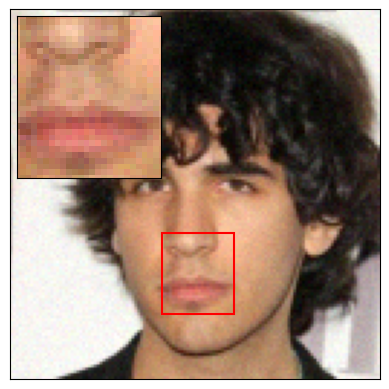}
		\caption{Observed}
	\end{subfigure}
	\hspace{0.02cm}
	\begin{subfigure}{0.23\textwidth}
		\includegraphics[width=1\textwidth]{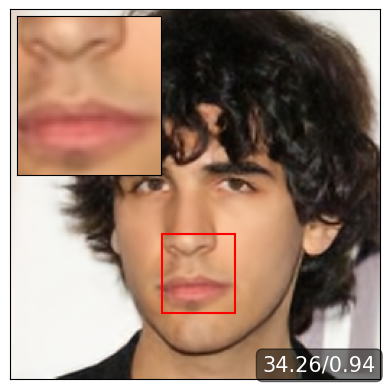}
		\caption{PGD}
	\end{subfigure}
	
	\begin{subfigure}{0.23\textwidth}
		\includegraphics[width=1\textwidth]{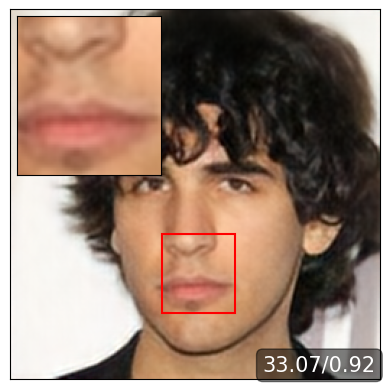}
		\caption{SOR}
	\end{subfigure}
	\hspace{0.02cm}
	\begin{subfigure}{0.23\textwidth}
		\includegraphics[width=1\textwidth]{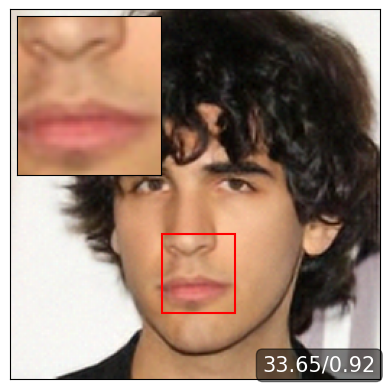}
		\caption{RED}
	\end{subfigure}
	\caption{Super-resolution of an image with a factor of 2 and a noise level of 0.02. All methods correctly recover the original image, with PGD being slightly the best.}\label{fig:SUPRES_2}
\end{figure}
\begin{figure}[!h]
	\centering
	\begin{subfigure}{0.23\textwidth}
		\includegraphics[width=1\textwidth]{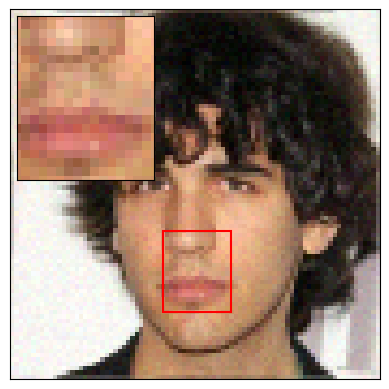}
		\caption{Observed}
	\end{subfigure}
	\hspace{0.02cm}
	\begin{subfigure}{0.23\textwidth}
		\includegraphics[width=1\textwidth]{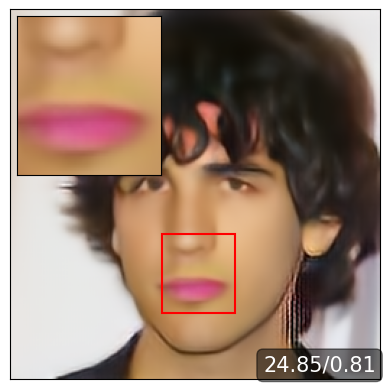}
		\caption{PGD}
	\end{subfigure}
	\hspace{0.02cm}
	\begin{subfigure}{0.23\textwidth}
		\includegraphics[width=1\textwidth]{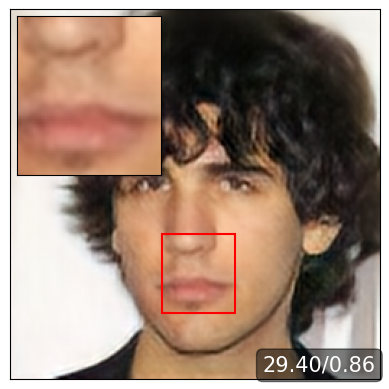}
		\caption{SOR}
	\end{subfigure}
	\hspace{0.02cm}
	\begin{subfigure}{0.23\textwidth}
		\includegraphics[width=1\textwidth]{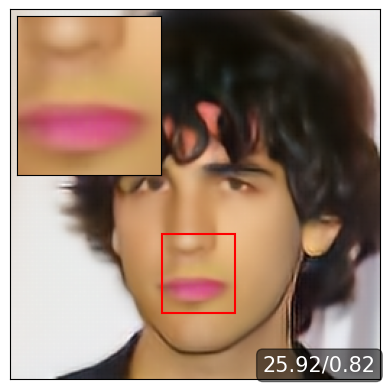}
		\caption{RED}
	\end{subfigure}
	\caption{Super-resolution of an image with a factor of 3 and a noise level of 0.02. SOR is the best in recovering the original image as it creates less color artifacts and does not smooth the image, unlike the two other methods.}\label{fig:SUPRES_3}
\end{figure}
\newpage

 \subsubsection*{Deblurring}
\begin{table}[!h]
	\centering
	\begin{tabular}{ccccc}
		\toprule
		$\sigma$           & Method & PSNR$\uparrow$  & SSIM$\uparrow$ & Iterations$\downarrow$\\
		\midrule
		\multirow{3}{*}{0.02}    & PGD    & 29.21 & 0.83 & 20.7         \\
		& SOR    & 28.19 & 0.80 & \textbf{8.82}         \\
		& RED    & \textbf{29.79} & \textbf{0.85} & 39.56        \\
		\hline
		\multirow{3}{*}{0.05} & PGD    & 28.76 & \textbf{0.82} & 16.6         \\
		& SOR    & 27.93 & 0.79 & \textbf{7.9}         \\
		& RED    & \textbf{28.80} & 0.81 & 27.78  \\
		\bottomrule                       
	\end{tabular}
	\caption{Deblurring of an image with a 5$\times$5 Gaussian kernel and noise level of 0.02. The performances of all methods are close, while SOR is faster.} \label{table:GBlur}
	\end{table}
	
	Last, we compare the results of a deblurring inverse problem. We consider a 5$\times$5 Gaussian kernel. According to Table \ref{table:GBlur}, SOR is not as good as RED and PGD but is still faster. This lower recovery is explained by the fact that deblurring is less challenging compared to the inpainting and the super-resolution tasks. Yet, this problem points out the necessity to have a good balance between the denoising performances and SOR to obtain the optimal recovery and convergence speed.

	\begin{figure}
		\centering
		\begin{subfigure}{0.23\textwidth}
			\includegraphics[width=1\textwidth]{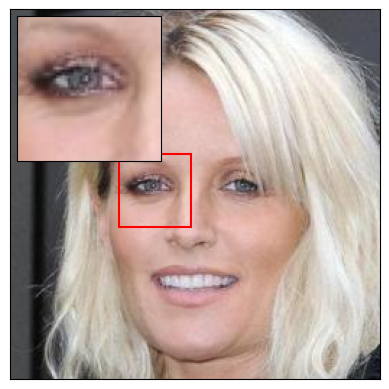}
			\caption{Original}
		\end{subfigure}
		\hspace{0.02cm}
		\begin{subfigure}{0.23\textwidth}
			\includegraphics[width=1\textwidth]{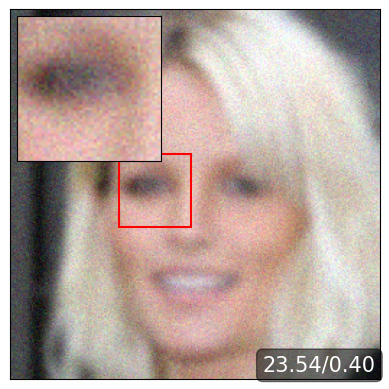}
			\caption{Observed}
		\end{subfigure}
		\hspace{0.02cm}
		\begin{subfigure}{0.23\textwidth}
			\includegraphics[width=1\textwidth]{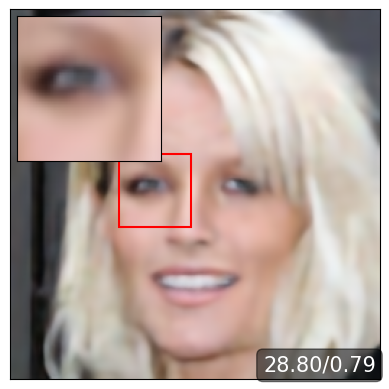}
			\caption{PGD}
		\end{subfigure}
		
		\begin{subfigure}{0.23\textwidth}
			\includegraphics[width=1\textwidth]{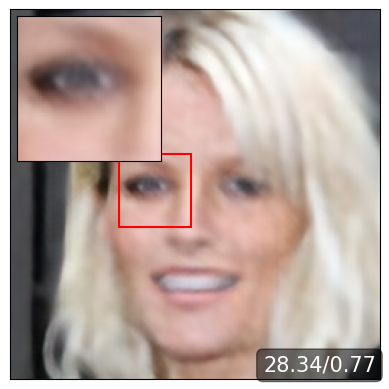}
			\caption{SOR}
		\end{subfigure}
		\hspace{0.02cm}
		\begin{subfigure}{0.23\textwidth}
			\includegraphics[width=1\textwidth]{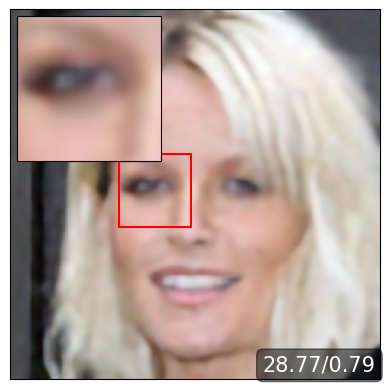}
			\caption{RED}
		\end{subfigure}	
		\caption{Deblurring of an image of a 5$\times$5  gaussian kernel and noise level of 0.05.  As deblurring inverse problem is better conditioned, RED and PGD have similar performance to SOR while SOR exhibits a $2$ to $3$ times speed up.}
	\end{figure}
	
	\vspace{0.5cm}
	In summary, we find that for every inverse problems, SOR
	loss speeds up the convergence of GPGD. For the inpainting and the super-resolution, it also allows a better identifiability of $\Sigma$, \textit{without $\Sigma$ ever being explicitly known}.
	\newpage

\section{Discussions and Limitations}\label{section:limitations}
While this work presents important and relevant results, it has some limitations that should be noted:
\begin{itemize}
	\item From a theoretical standpoint, we have supposed in the theorems that the orthogonal projection onto $\Sigma$ exists (\textit{i.e.} in $\mathbb{R}^n$, $\Sigma$ is a proximinal set). However, this is not generally the case. Generalizing our results to any set is still an open theoretical question.
	
	\item Experimentally, even if SOR proved its efficiency on different DPPs for multiple inverse problems, it does not fully use the decomposition exhibited in theorem \ref{theorem:eta_L}. However, the unused term $H$ in this theorem can be hard to exploit as it is closely related to the orthogonal projection over $\Sigma$ that is not accessible. 
	
	\item Assessing the reconstruction quality depends on the chosen metrics. We mainly use those that are derived from the $ L_2$-norm. They showed in particular that each methods behave differently in recovering the original images (by smoothing, or by preserving the textures but creating artifacts, etc.).  In future work, we could consider other norms and adapt SOR accordingly. 
	
\end{itemize}

\section{Conclusion}
In this paper, we have introduced the \emph{Stochastic Orthogonal Regularization}, a novel regularization that improves generalized projected gradient descent algorithms using deep projective priors. We theoretically justified the choice of SOR and showed experimentally that the resulting algorithms are both faster and more stable in the context of imaging inverse problems, with different choices of priors and datasets. 

Several open questions remain in our work. Firstly, we can investigate whether our choice of distribution of random images used for SOR is optimal. It would also be interesting to see if SOR may be used to improve the photo-realism of generative networks such as Variational Autoencoders or Diffusion Models. Indeed, the goal of such networks is to randomly sample ``realistic'' from a complex model of natural images. The main challenge would be to adapt the method to the non-deterministic setting of such priors.

\begin{MSCcodes}
68U10, 68T07, 90C26, 94A08
\end{MSCcodes}

\bibliographystyle{siamplain}
\bibliography{ref}
\newpage

	\appendix
\section{Proof of Theorem \ref{th:gen_convevergence}}
\begin{proof}

We have
\begin{equation}
	\begin{split}
		\|x_{n+1} -\gt\|_2^2 &= \| P_\Sigma(x_n)- \gamma \mathbf{A}^T(\mathbf{A}P_\Sigma(x_n)-y) -\gt \|_2^2\\
		&= \| P_\Sigma(x_n) - \gamma \mathbf{A}^T\mathbf{A}(P_\Sigma(x_n)-\gt) + \gamma \mathbf{A}^T e - \gt \|_2^2\\
		&=\| (\mathbf{I}-\gamma \mathbf{A}^T\mathbf{A})(P_\Sigma(x_n)-\gt) + \gamma \mathbf{A}^Te  \|_2^2\\
	\end{split}
\end{equation}
With the triangle inequality, the RIC $\delta = \delta( \mathbf{A}^T A)$ and the restricted $\beta$-Lipschitz condition of $P_\Sigma$, we have
\begin{equation}\label{eq:rec1}
	\begin{split}
		\|x_{n+1} -\gt\|_2 &\leq \| (\mathbf{I}-\gamma  \mathbf{A}^T\mathbf{A})(P_\Sigma(x_n)-\gt)\|_2 +\| \gamma \mathbf{A}^T e  \|_2\\
		&\leq \delta\| P_\Sigma(x_n)-\gt\|_2 +\| \gamma  \mathbf{A}^Te  \|_2\\
		&\leq \delta \beta\|x_n-\gt\|_2 +\| \gamma  \mathbf{A}^Te  \|_2.\\
	\end{split}
\end{equation}
We show by induction
\begin{equation}
	\begin{split}
		\|x_{n+1} -\gt\|_2 &\leq (\delta \beta)^{n+1}\|x_0-\gt\|_2+\left(\sum_{i=0}^{n}(\delta \beta)^i\right)\|\gamma \mathbf{A}^Te\|_2 \\
	\end{split}
\end{equation}
For $n=0$, this is given exactly by \eqref{eq:rec1}. For $n\geq 1$, suppose this inequality true at step $n-1$, with \eqref{eq:rec1}, we have 
\begin{equation}\label{eq:rec1}
	\begin{split}
		\|x_{n+1} -\gt\|_2 
		&\leq \delta \beta\|x_n-\gt\|_2 +\|\gamma  \mathbf{A}^Te  \|_2.\\
		&\leq (\delta \beta)^{n+1}\|x_0-\gt\|_2 +\delta\beta\left(\sum_{i=0}^{n-1}(\delta \beta)^i\right)\|\gamma  \mathbf{A}^Te\|_2 +\|\gamma  \mathbf{A}^Te  \|_2 \\
		&=(\delta \beta)^{n+1}\|x_0-\gt\|_2+\left(\sum_{i=0}^{n}(\delta \beta)^i\right)\|\gamma  \mathbf{A}^Te\|_2 .\\
	\end{split}
\end{equation}

Hence,when $\delta\beta <1$,

\begin{equation}
	\begin{split}
		\|x_{n} -\gt\|_2	& \leq (\delta \beta)^{n}\|x_0-\gt\|_2 + \frac{1}{1-\delta \beta}\|\gamma  \mathbf{A}^T e\|_2
	\end{split}
\end{equation}

\end{proof}

\section{Proof of Theorem \ref{theorem:general_rlc}}

\begin{proof}[Proof of theorem \ref{theorem:general_rlc}]
Let $z\in\bR^n$, $x\in\Sigma$. We denote by $\orth$ the orthogonal projection onto $\Sigma$ and $P$ a given projection onto $\Sigma$. We want to show that $P$ has a restricted Lipschitz constant $\beta$, \textit{i.e.} that there  is a $\beta$ such that $\norm{P(z)-x}{2}\leq \beta \norm{x-z}{2}$. 
By the triangular inequality and with the hypothesis, we have: 
\begin{equation}
	\begin{split}
		\norm{P(z)-x}{2} & \leq \norm{P(z)-\orth(z)}{2} + \norm{\orth(z)-x}{2}\\
		& \leq L\norm{z-\orth(z)}{2}+\norm{\orth(z)-x}{2}.\\
	\end{split}
\end{equation}

By definition of $\beta^\perp = \beta_\Sigma (P_\Sigma^\perp)$, we obtain:
\begin{equation}
	\begin{split}
		\norm{P(z)-x}{2}	& \leq L\norm{z-\orth(z)}{2}+\beta^{\perp}\norm{z-x}{2}\\
		& \leq L\norm{z-x}{2}+\beta^{\perp}\norm{z-x}{2} \text{~by definition of~} \orth \\ 			
		& \leq (L+\beta^{\perp})\norm{z-x}{2}\\
	\end{split}
\end{equation}
which concludes the proof.
\end{proof}

\section{Proof of Theorem \ref{theorem:eta_L}}

\begin{proof}[Proof of theorem \ref{theorem:eta_L}]

As $\Sigma$ is a homogeneous model and it can be decomposed as $ \Sigma=\underset{i\in I}{\bigcup} \spn(x_i)$, where $I$ is a finite or non finite set of indices and $x_i\in\bR^n$ such that $\norm{x_i}{2}=1$. 

Let $z\in \bR^n$. If $z\in\Sigma$, $\norm{P(z)-\orth(z)}{2}=\norm{z-\orth(z)}{2}=0$, any choice of $L$ verifies the hypothesis of Theorem~\ref{theorem:general_rlc}.

We suppose $z\in\bR^n \setminus \Sigma$ from now on. Let $i \in I$ and $\lambda \in \bR$ such that $P(z)=\lambda x_i$. Without loss of generality, we consider $\lambda>0$. We consider $\orthi$ the orthogonal projection over $V_i\myeq \spn(x_i)$. Therefore, $\orthi(z)=\langle z,x_i \rangle x_i$. We also consider $k\in I$ such that $\orth(z)=P_{V_k}^{\perp}(z)=\langle z,x_k \rangle x_k$ (since the orthogonal projection is supposed non-empty).
By the triangular inequality, we have:
\begin{equation}\label{eq:ineq_tri}
	\begin{split}
		\frac{\norm{P(z)-\orth(z)}{2}}{\norm{z-\orth(z)}{2}} & \leq  \frac{\norm{P(z)-\orthi(z)}{2}}{\norm{z-\orth(z)}{2}}+\frac{\norm{\orthi(z)-\orth(z)}{2}}{\norm{z-\orth(z)}{2}}\\
		& =  \frac{\norm{P(z)-\orthi(z)}{2}}{\norm{z-P(z)}{2}}\frac{\norm{z-P(z)}{2}}{\norm{z-\orth(z)}{2}}+\frac{\norm{\orthi(z)-\orth(z)}{2}}{\norm{z-\orth(z)}{2}}.\\
	\end{split}
\end{equation}

First: 
\begin{equation*}
	\begin{split}
		\frac{\norm{P(z)-\orthi(z)}{2}}{\norm{z-P(z)}{2}}&=\frac{\norm{\lambda x_i-\langle z,x_i\rangle x_i}{2}}{\norm{z-P(z)}{2}}=\frac{|\lambda-\langle z,x_i\rangle |}{\norm{z-P(z)}{2}}.\\
	\end{split}
\end{equation*}

We remark that  $\psi_P(z)=\frac{|\langle P(z),z-P(z)\rangle|}{\norm{P(z)}{2}\norm{z-P(z)}{2}}=\frac{|\langle \lambda x_i,z-\lambda x_i\rangle|}{\norm{\lambda x_i}{2}\norm{z-P(z)}{2}}=\frac{|\langle x_i,z-\lambda x_i\rangle|}{\norm{z-P(z)}{2}}=\frac{|\langle x_i,z\rangle -\lambda|}{\norm{z-P(z)}{2}}$, which gives:

\begin{equation}\label{eq:1}
	\begin{split}
		\frac{\norm{P(z)-\orthi(z)}{2}}{\norm{z-P(z)}{2}}&=\psi_P(z).\\
	\end{split}
\end{equation} 

Next, we focus on bounding $\frac{\norm{z-P(z)}{2}}{\norm{z-\orth(z)}{2}}$. We denote by $A(z):=1-\frac{\norm{z-\orth(z)}{2}^2}{\norm{z-P(z)}{2}^2}$ such that $\frac{\norm{z-P(z)}{2}^2}{\norm{z-\orth(z)}{2}^2}=\frac{1}{1-A(z)}$. Then:

\begin{equation}
	\begin{split}
		A(z)=1-\frac{\norm{z-\orth(z)}{2}^2}{\norm{z-P(z)}{2}^2}&=\frac{\norm{z-P(z)}{2}^2-\norm{z-\orth(z)}{2}^2}{\norm{z-P(z)}{2}^2}\\
		&=\frac{\norm{\lambda x_i-z}{2}^2-\norm{\langle x_k,z \rangle x_k-z}{2}^2}{\norm{z-P(z)}{2}^2}\\
		&=\frac{\lambda^2-2\lambda\langle x_i,z \rangle+\norm{z}{2}^2 -\langle x_k,z \rangle^2+2\langle x_k,z \rangle^2-\norm{z}{2} ^2}{\norm{z-P(z)}{2}^2}\\
		&=\frac{(\lambda-\langle z,x_i\rangle )^2}{\norm{z-P(z)}{2}^2}+  \frac{\langle x_k,z \rangle^2- \langle x_i,z \rangle^2}{\norm{z-P(z)}{2}^2}.\\
	\end{split}
\end{equation}
The first term is equal to $\psi_P(z)^2$, and the second term can be bounded as follows. With the Cauchy-Schwarz inequality,
\begin{equation*}
	\begin{split}
		\frac{\langle x_k,z \rangle^2- \langle x_i,z \rangle^2}{\norm{z-P(z)}{2}^2}&= \frac{(\langle x_k+ \langle x_i,z \rangle)(\langle x_k,z \rangle- \langle x_i,z \rangle)}{\norm{z-P(z)}{2}^2}\\
		&= \frac{(\langle x_k+ x_i,z \rangle)(\langle x_k- x_i,z \rangle)}{\norm{z-P(z)}{2}^2}\\
		&\leq \frac{\norm{z}{2}^2\norm{x_k+x_i}{2}\norm{x_k-x_i}{2}}{\norm{z-P(z)}{2}^2}\\
		&=\frac{\norm{z}{2}^2\sqrt{1+1-2\langle x_k,x_i\rangle}\sqrt{1+1+2\langle x_k,x_i\rangle}}{\norm{z-P(z)}{2}^2}\\
		&= \frac{2\norm{z}{2}^2\sqrt{1-\langle x_k,x_i\rangle}\sqrt{1+\langle x_k,x_i\rangle}}{\norm{z-\orth(z)}{2}^2}\\				
		&= \frac{2\norm{z}{2}^2\sqrt{1-\langle x_k,x_i\rangle^2}}{\norm{z}{2}^2-\langle z,x_k \rangle^2}=\frac{2\sqrt{1-\langle x_k,x_i\rangle^2}}{1-\langle \frac{z}{\norm{z}{2}},x_k \rangle^2}.\\
	\end{split}
\end{equation*} 
We can write $\langle x_k,x_i\rangle=\alpha(\orth(z),P(z))$ and $\langle x_k,\frac{z}{\norm{z}{2}}\rangle=\alpha(\orth(z),z)$. Thus:
\begin{equation}
	\frac{\langle x_k,z \rangle^2- \langle x_i,z \rangle^2}{\norm{z-P(z)}{2}^2}=\frac{2\sqrt{1-\alpha(\orth(z),P(z))^2}}{1-\alpha(\orth(z),z)^2}:=\phi^2_P(z).
\end{equation}
Therefore $A(z)\leq \psi_P(z)^2+\phi_P(z)^2$. 	As we supposed $ \underset{z\in\mathbb{R}^n\setminus \Sigma}{\sup} \psi_P(z)^2+\underset{z\in\mathbb{R}^n\setminus \Sigma}{\sup} \phi_P^2<1$, then $A(z)\leq \psi_P(z)^2+\phi_P(z)^2<1$ and we can write that:
\begin{equation}\label{eq:2}
	\frac{\norm{z-P(z)}{2}^2}{\norm{z-\orth(z)}{2}^2}=\frac{1}{1-A(z)}\leq \frac{1}{1-\psi_P(z)^2-\phi_P(z)^2}.
\end{equation}

Lastly, we compute the third term in \ref{eq:ineq_tri}. We have:
\begin{equation}
	\begin{split}
		\frac{\norm{\orthi(z)-\orth(z)}{2}^2}{\norm{z-\orth(z)}{2}^2}&=	\frac{\norm{\langle x_i,z \rangle x_i-\langle x_k,z \rangle x_k}{2}^2}{\norm{z-\orth(z)}{2}^2}\\
		&=\frac{\langle x_i,z\rangle^2 +\langle x_k,z\rangle^2-2\langle x_k,z\rangle\langle x_i,z\rangle\langle x_k,x_i \rangle}{\norm{z-\orth(z)}{2}^2}\\
		&\leq \frac{2\langle x_k,z\rangle^2-2\langle x_k,z\rangle\langle x_i,z\rangle\langle x_k,x_i \rangle }{\norm{z-\orth(z)}{2}^2}~,\text{because~} \norm{\orth(z)}{2}\geq \norm{\orthi}{2}\\
		&= \frac{2\langle x_k,z\rangle(\langle x_k,z\rangle-\langle x_i,z\rangle\langle x_k,x_i \rangle )}{\norm{z-\orth(z)}{2}^2}.\\
	\end{split}
\end{equation}
With the Cauchy-Schwarz inequality,					
\begin{equation}
	\begin{split}
		\frac{\norm{\orthi(z)-\orth(z)}{2}^2}{\norm{z-\orth(z)}{2}^2}&\leq \frac{2\norm{z}{2}|\langle z,x_k-\langle x_k,x_i \rangle x_i \rangle|}{\norm{z-\orth(z)}{2}^2}\\
		&\leq \frac{2\norm{z}{2}\cdot \norm{z}{2} \norm{x_k-\langle x_k,x_i \rangle x_i}{2} }{\norm{z-\orth(z)}{2}^2}\\
		&= \frac{2\norm{z}{2}^2\sqrt{\norm{x_k-\langle x_k,x_i \rangle x_i}{2}^2 }}{\norm{z-\orth(z)}{2}^2}\\
		&= \frac{2\norm{z}{2}^2\sqrt{1+\langle x_k,x_i \rangle^2-2\langle x_k,x_i \rangle^2} }{\norm{z-\orth(z)}{2}^2} \\
		&= \frac{2\norm{z}{2}^2\sqrt{1-\langle x_k,x_i \rangle^2}} {\norm{z-\orth(z)}{2}^2}=\frac{2\sqrt{1-\langle x_k,x_i\rangle^2}}{1-\langle \frac{z}{\norm{z}{2}},x_k \rangle^2}=\phi_P(z)^2.\\
	\end{split}
\end{equation}

This implies that:
\begin{equation}\label{eq:3}
	\frac{\norm{\orthi(z)-\orth(z)}{2}^2}{\norm{z-\orth(z)}{2}^2} \leq \\
	\phi(z)^2.
\end{equation}

By combining \ref{eq:1}, \ref{eq:2} and \ref{eq:3}, we conclude: 

\begin{equation*}
	\begin{split}
		\frac{\norm{P(z)-\orth(z)}{2}}{\norm{z-\orth(z)}{2}}&\leq \frac{\psi_P(z)}{\sqrt{1-\psi_P(z)^2+\phi_P(z)^2}} + \phi_P(z) \leq \frac{\Psi_P}{\sqrt{1-\Psi_P^2-\Phi_P^2}} + \Phi_P:=L.\\
	\end{split}
\end{equation*}

Where $\Psi= \underset{z\in\bR^n\setminus \Sigma}{\sup}\psi_P(z)$ and 
$\Phi_P=\underset{z\in\bR^n\setminus \Sigma}{\sup}\phi_P(z)$.
\end{proof}

\section{Training ressources and architectures of neural networks}\label{section:architecture_reseaux}
The neural networks have been built with Pytorch and trained on a cluster with one NVIDIA A100 Tensor Core GPU. This same GPU was used to solve inverse problems with DPPs. Tables \ref{tab:ae-archi} and \ref{tab:unetres_arch} give the architectures of the used neural network.
\newpage
\begin{table}[h]
\centering
\small
\caption{Autoencoder architecture for MNIST images}
\begin{tabular}{lll}
	\toprule
	\textbf{Stage} & \textbf{Layer Type} & \textbf{Details} \\
	\midrule
	Input  & Conv2d             & $1 \rightarrow 4$, kernel $3\times3$, stride 2, padding 1,                \\
	&&LeakyReLU\\
	
	Encoder 1 & Downsample Conv2d   & $4 \rightarrow 8$, kernel $3\times3$, stride 2,                       \\
	&&LeakyReLU\\
	Flatten & Flatten & - \\
	MLP Encoder 1& Linear    & $392 \rightarrow 210$, LeakyReLU                   \\
	MLP Encoder 2 & Linear    & $210 \rightarrow 200$, LeakyReLU                   \\
	Latent code & & size 200\\
	MLP Decoder 1 & Linear    & $200 \rightarrow 210$, LeakyReLU                   \\
	MLP Decoder 2 & Linear    & $210 \rightarrow 392$, LeakyReLU                   \\
	Resize & Resize to (8,7,7) & - \\
	
	Decoder 1 & Conv2d   & $8 \rightarrow 8$, kernel $3\times3$, stride 1                       \\
	Upsample&Upsample bilinear mode of factor 2 & LeakyReLU \\
	
	Decoder 2 & Conv2d   & $4 \rightarrow 4$, kernel $3\times3$, stride 1                      \\
	Upsample&Upsample bilinear mode of factor 2 & LeakyReLU \\
	
	Output & Conv2d   & $4 \rightarrow 1$, kernel $3\times3$, stride 1                      \\
	
	\bottomrule
\end{tabular}
\label{tab:ae-archi}
\end{table}

\begin{table}[h]
\centering
\small
\caption{DRUNET Architecture for CelebA (and MRI)}
\begin{tabular}{llll}
	\toprule
	\textbf{Stage} & \textbf{Layer Type} & \textbf{Details} & \textbf{Skip Connection} \\
	\midrule
	Input   & Conv2d             & $3 (1)\rightarrow 64$, kernel $3\times3$, stride 1, padding 1             & –                 \\
	Down1   & ResBlock $\times 2$ & $64 \rightarrow 64$, kernel $3\times3$, ELU                             & $\rightarrow$ Up1 \\
	& Downsample Conv2d   & $64 \rightarrow 64(128)$, kernel $2\times2$, stride 2                      &                   \\
	Down2   & ResBlock $\times 2$ & $64 (128)\rightarrow 64(128)$, kernel $3\times3$, ELU                          & $\rightarrow$ Up2 \\
	& Downsample Conv2d   & $64(128) \rightarrow 128$, kernel $2\times2$, stride 2                     &                   \\
	Down3   & ResBlock $\times 2$ & $128 \rightarrow 128$, kernel $3\times3$, ELU                          & $\rightarrow$ Up3 \\
	& Downsample Conv2d   & $128 \rightarrow 128$, kernel $2\times2$, stride 2                     &                   \\
	Body    & ResBlock $\times 2$ & $128 \rightarrow 128$, kernel $3\times3$, ELU                          & –                 \\
	Up3     & ConvTranspose2d     & $128 \rightarrow 128$, kernel $2\times2$, stride 2                     & $\leftarrow$ Down3\\
	& ResBlock $\times 2$ & $128 \rightarrow 128$, kernel $3\times3$, ELU                          &                   \\
	Up2     & ConvTranspose2d     & $128 \rightarrow 64(128)$, kernel $2\times2$, stride 2                     & $\leftarrow$ Down2\\
	& ResBlock $\times 2$ & $64(128) \rightarrow 64(128)$, kernel $3\times3$, ELU                          &                   \\
	Up1     & ConvTranspose2d     & $64(128)\rightarrow 64$, kernel $2\times2$, stride 2                      & $\leftarrow$ Down1\\
	& ResBlock $\times 2$ & $64 \rightarrow 64$, kernel $3\times3$, ELU                            &                   \\
	Output  & Conv2d              & $64 \rightarrow 3(1)$, kernel $3\times3$, stride 1, padding 1             & –                 \\
	\bottomrule
\end{tabular}
\label{tab:unetres_arch}
\end{table}

\section{Test of SOR for an autoencoder over different images of $\mathbb{R}^n$ }\label{section:results_lambda}

In this section, we apply our SOR loss, as detailed in Section~\ref{section:section_ortho}, to an autoencoder trained on MNIST.

To test the efficiency of the SOR, we compare the mean $\psi_P$ defined in (\ref{eq:psi}) for each autoencoder at different distances from the model $\Sigma$: within, near, and far from $\Sigma$. To do this, we compute the average PSNR of MNIST reconstruction and the mean of $\psi_P$ for four different datasets: MNIST, MNIST with an additive Gaussian noise, Fashion MNIST dataset (28x28 clothes images) and images in $\mathcal{U}([0,1]^n)$. We train different autoencoders with several weights $\lambda$. Figure \ref{fig:results_lambda} in appendix \ref{section:results_lambda} represents the PSNR for MNIST reconstruction by the autoencoders and $\psi_P(z)$ as functions of the regularization coefficient $\lambda$. It shows that while the PSNRs remain acceptable with the increase of $\lambda$, $\psi_P$ decreases in each of the four datasets. This is particularly significant for datasets on which the autoencoder has not been trained (eg, Fashion MNIST). These autoencoders approximate better orthogonal projections. 

We choose $\lambda=0.4$ as an optimal tradeoff between the MSE and the SOR.

\begin{figure}[h]
\centering
\includegraphics[width=0.9\textwidth]{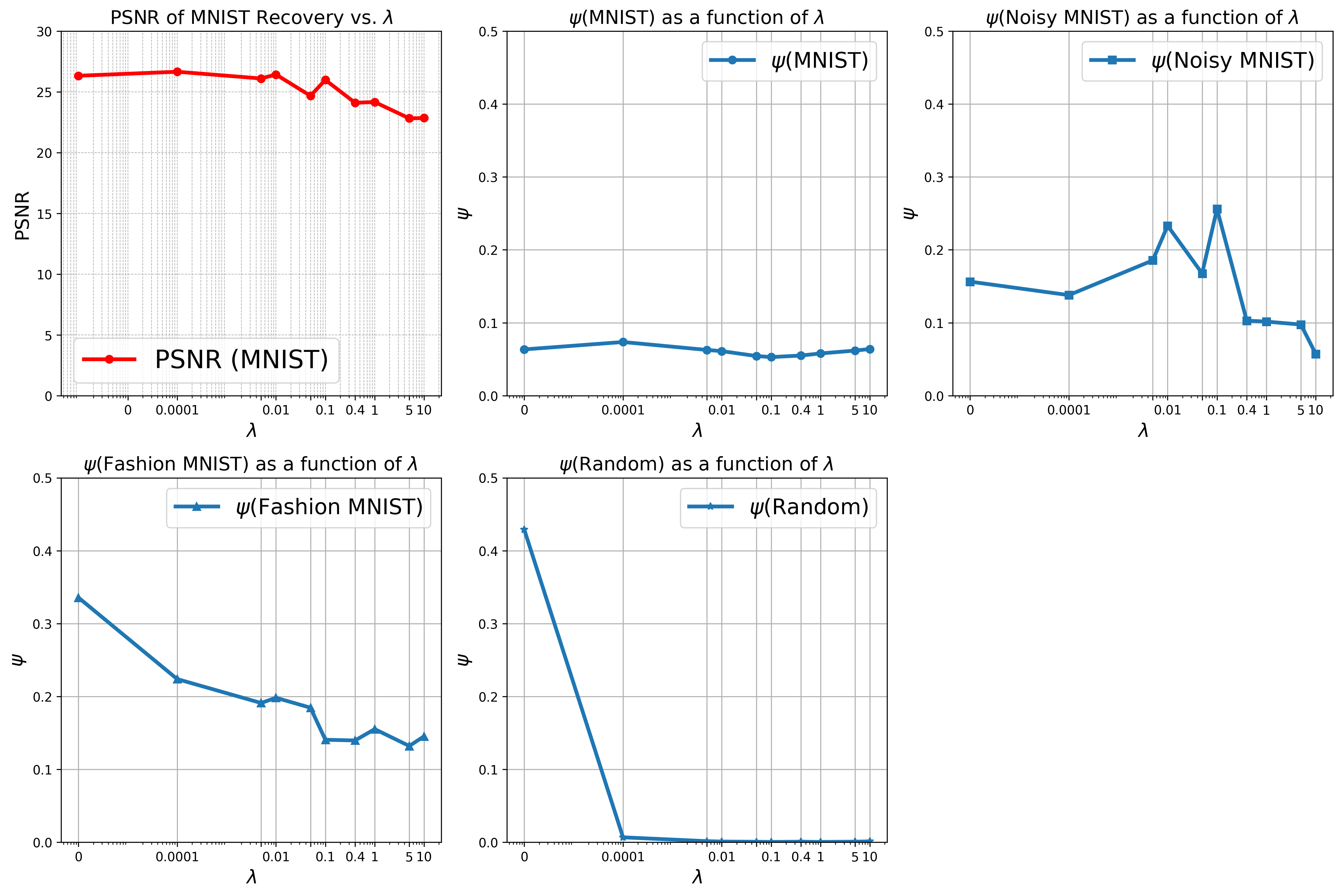}
\caption{PSNR of MNIST recovery and $\psi_P$ for four different datasets. $\psi_P$ is decreasing overall for the different datasets with respect to $\lambda$ while the PSNR is only slightly  degraded.}\label{fig:results_lambda}
\label{fig:orth_values}
\end{figure}

\section{SOR of Plug and Play DPP on MRI images}\label{section:pnp-mri}

We train a DRUNET denoiser on a dataset of size 9000 of MRI images \cite{ds003592:1.0.13}. The images represent brains scanned from different angles. They are grayscale images resized to 256x256. We considered for these, two denoisers without SOR ($\lambda=0$) and with SOR ($\lambda=0.1$, which we selected manually to give the best results) trained during 500 epochs.

\begin{figure}[h]
\centering
\begin{subfigure}{0.45\textwidth}
	\includegraphics[width=1\textwidth]{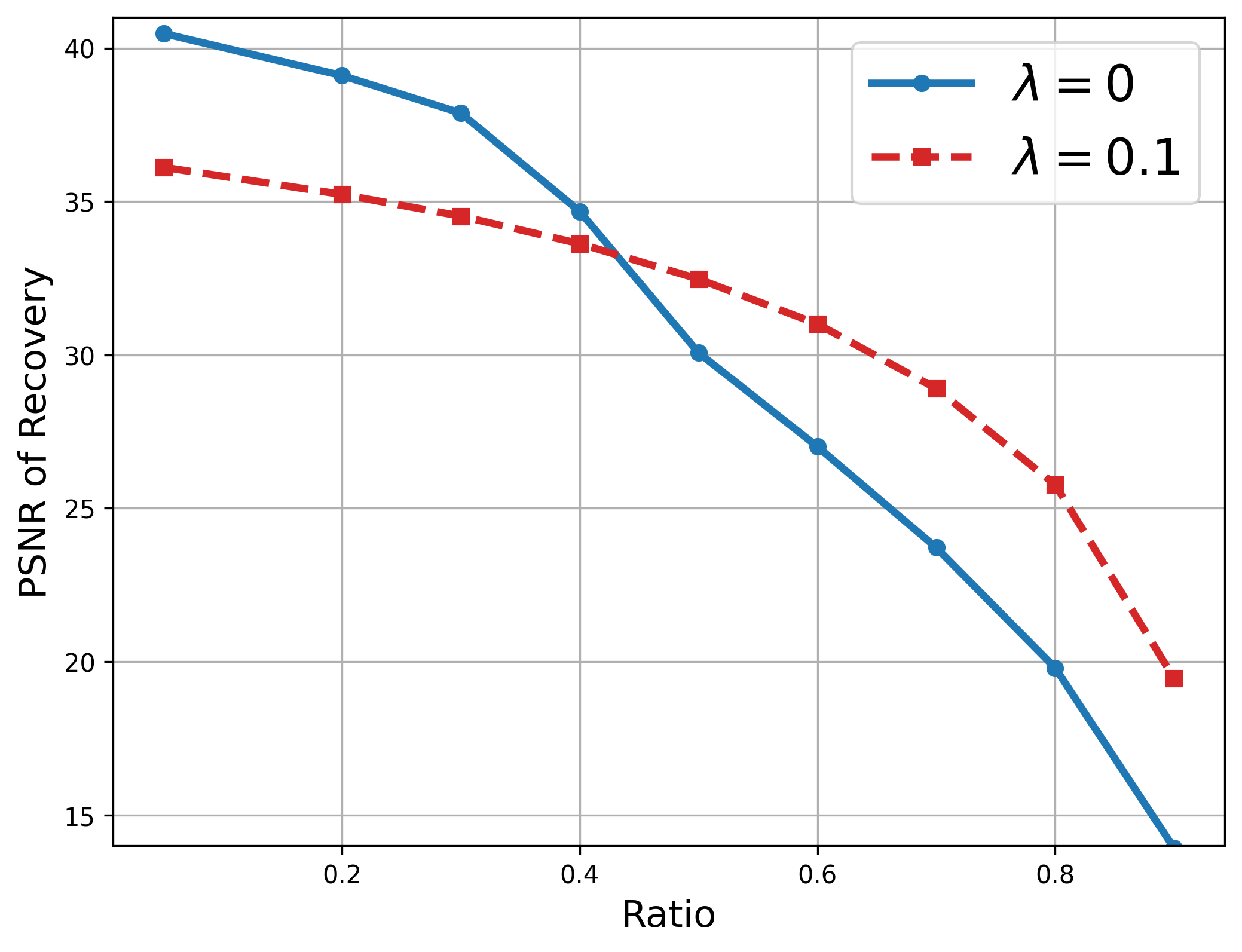}
	\caption*{Evolution of the mean PSNR of recovery as a function of the ratio of missing pixels}
\end{subfigure}
\hspace{0.1cm}
\begin{subfigure}{0.187\textwidth}
	\includegraphics[width=1\textwidth]{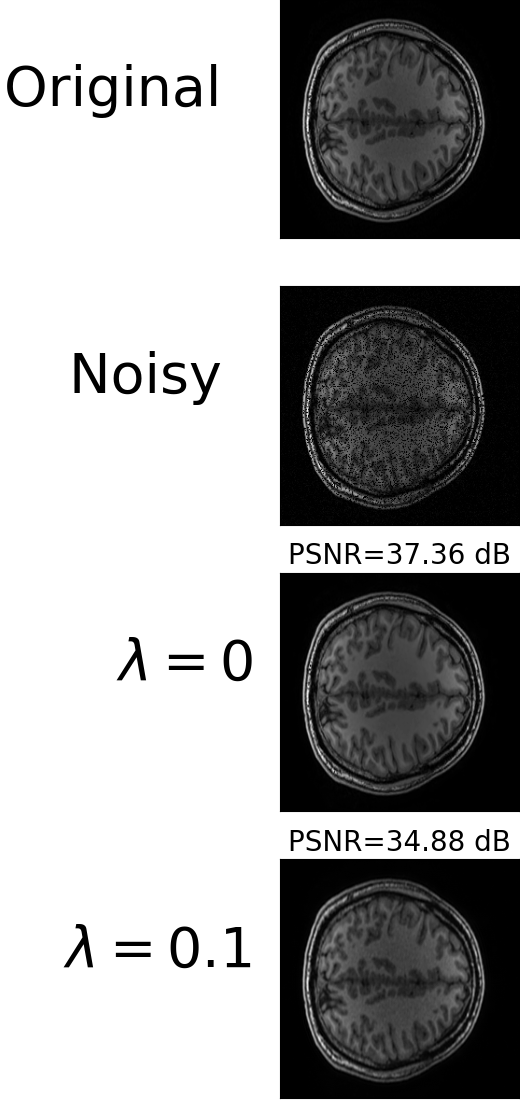}
	\caption*{~~~~~~Ratio: ~~~0.2}
\end{subfigure}
\hspace{0.1cm}
\begin{subfigure}{0.086\textwidth}
	\includegraphics[width=1\textwidth]{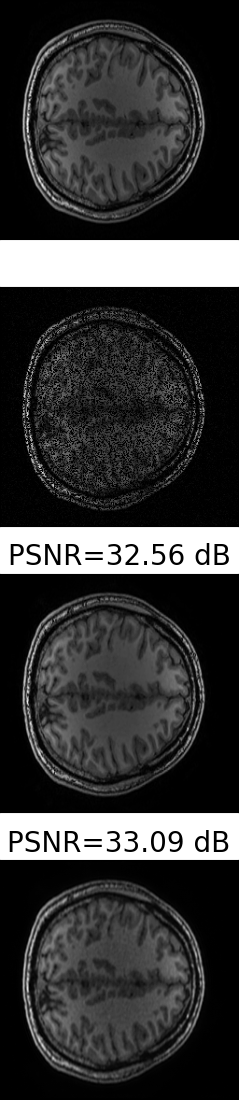}
	\caption*{0.4}
\end{subfigure}
\hspace{0.1cm}
\begin{subfigure}{0.0855\textwidth}
	\includegraphics[width=1\textwidth]{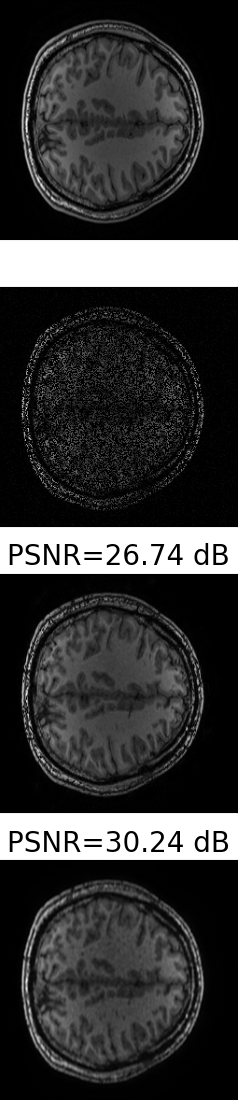}
	\caption*{0.6}
\end{subfigure}
\hspace{0.1cm}
\begin{subfigure}{0.086\textwidth}
	\includegraphics[width=1\textwidth]{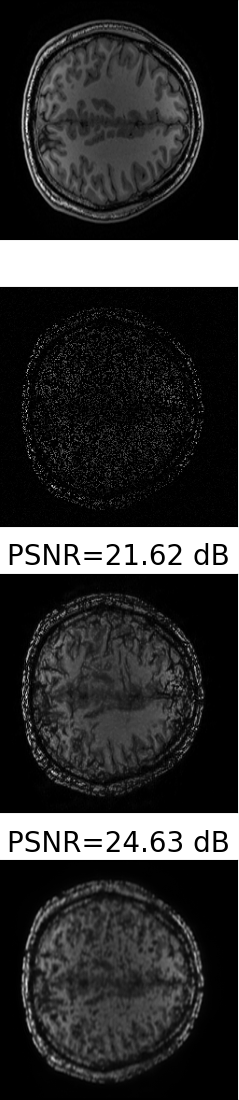}
	\caption*{0.8}
\end{subfigure}

\caption{Graph and visual results of the evolution of the recovered MRI images with the increase of the inpainting ratio for the different denoisers. In higher ratios, the regularized denoiser performs better than the non-regularized denoiser.}\label{fig:mri_graph_mask}
\end{figure}

Figure \ref{fig:mri_graph_mask} represents the recovery from a random inpainting problem and the evolution of the mean PSNR of recovered MRI images as a function of the inpainting ratio. Again, the mean recovery PSNR deteriorates far more quickly for the non-regularized denoiser ($\lambda=0$) compared to the regularized one ($\lambda=0.1$). At the highest ratios, the denoising prior trained with the SOR loss with $\lambda=0.1$ clearly outperforms the denoising prior with no SOR regularization, with up to a 3dB improvement for $80\%$ missing pixels. This once again highlights the robustness of SOR in challenging situations. 

Furthermore, Figures \ref{fig:sup_res_mri_2} and \ref{fig:sup_res_mri_3} represent the visual recoveries from a super-resolution inverse problem with two different factors ($2$ and $3$) for several images. When the super resolution factor is 2, the mean average PSNR recovery over the test set of the standard denoiser ($\lambda=0$) is 31.84 dB, with a convergence of 21.68 iterations. On the other hand, the regularized denoiser ($\lambda=0.1$) increases the average PSNR to 33.02 dB, and the convergence is faster, reducing to 16.58 iterations on average. In summary, SOR allows a faster convergence while providing acceptable (even better) recovery images.

With a sub-sampling by a factor of 3, keeping the Gaussian blur $\mathbf{F}$ unchanged (i.e. some aliasing is present in the subsampled images), the denoiser without SOR gives an average PSNR of 24.65 dB with a convergence in 51.47 iterations, whereas the regularized denoiser gives a PSNR of 28.52 dB with a faster convergence of 43.5 iterations. The regularized denoiser suffers less from the new augmented factor with a loss of 4.49 dB (from 33.02dB to 28.52dB) compared to 7.19 dB of the basic denoiser (31.84dB to 24.65dB). The particularity of this example is that the recovered images with $\lambda=0.1$ create less ringing effect compared to the $\lambda=0$, as can be seen in Figure \ref{fig:sup_res_mri_3}. This shows that the SOR adds clearly a robustness to aliasing which can be modeled as a structured noise (again through the stability constant given by Theorem~\ref{th:gen_convevergence}). In summary, we find that not only SOR loss speeds up the convergence of GPGD, but also allows a better identifiability of $\Sigma$, \emph{without $\Sigma$ ever being explicitly known}.

\begin{figure}[h]
\centering
\begin{subfigure}{0.4\textwidth}
	\includegraphics[width=\textwidth]{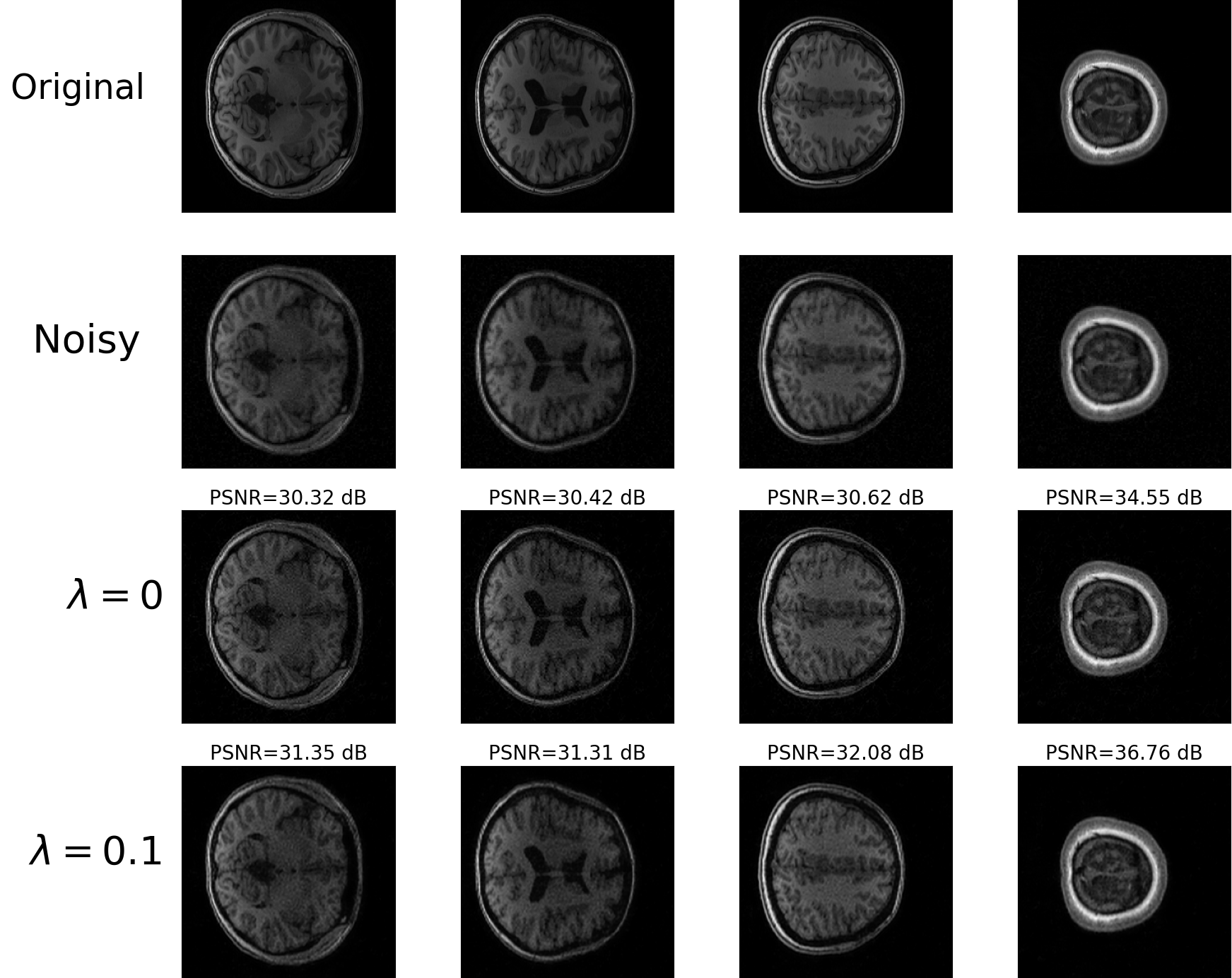}
	\caption{Super resolution of a factor 2. Mean PSNR: \textbf{31.84dB} for $\lambda=0$ and \textbf{33.02dB} for $\lambda=0.1$. $\lambda=0.1$ (faster by 25$\%$)}
	\label{fig:sup_res_mri_2}
\end{subfigure}
\hspace{0.1cm}
\begin{subfigure}{0.4\textwidth}
	\includegraphics[width=\textwidth]{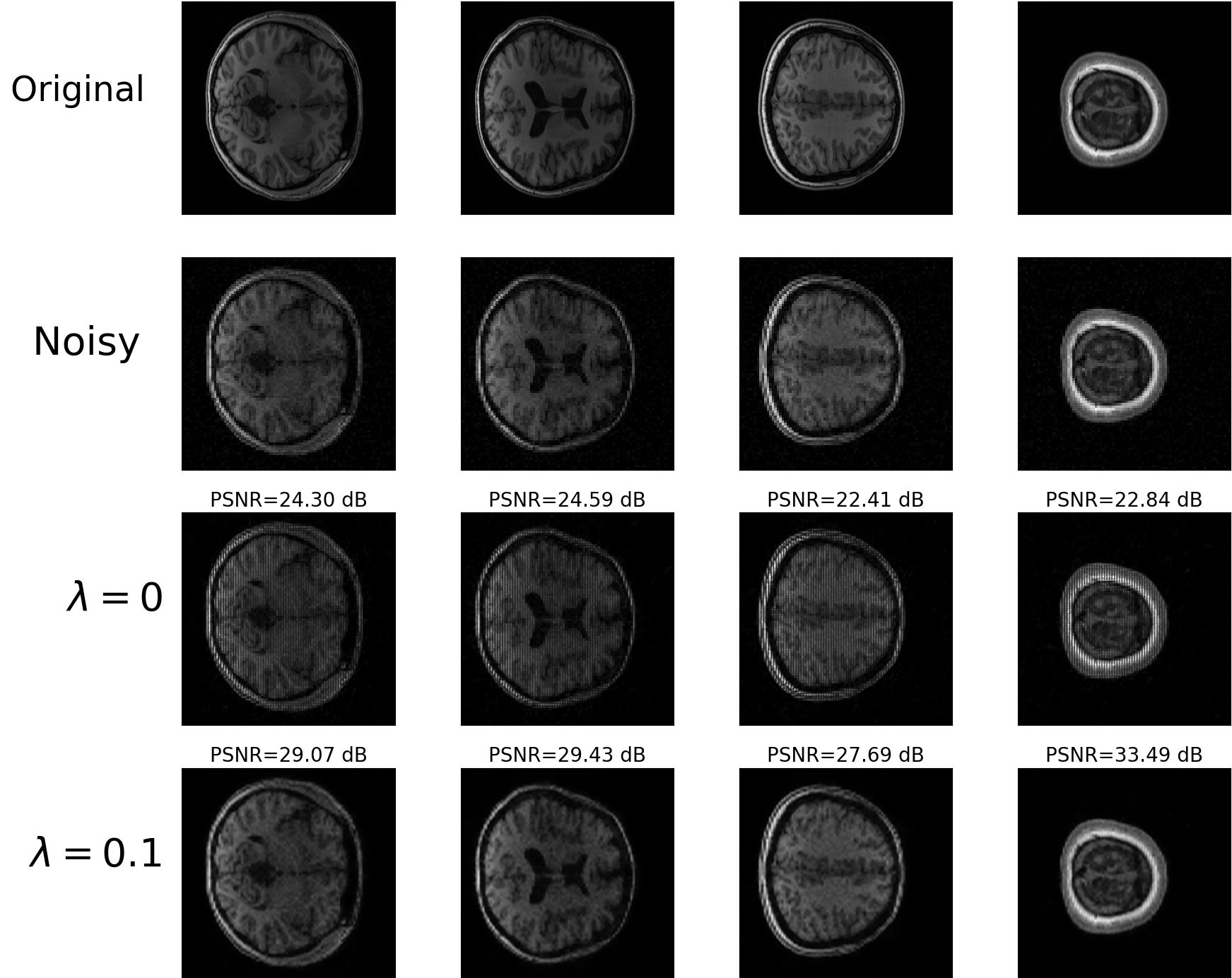}
	\caption{Super resolution of a factor 3. The mean PSNR is \textbf{24.65} for $\lambda=0$ and \textbf{28.52} for $\lambda=0.1$. $\lambda=0.1$ is quicker by 16$\%$}
	\label{fig:sup_res_mri_3}
\end{subfigure}
\caption{Recovery of MRI images for a super-resolution inverse problem. We can see that regularizing gives better recovery in both super-resolution cases and is more robust to aliasing.}
\end{figure}

\newpage 
\section{Additional experiments on MNIST}
\subsection{Inverse problems for a noiseless model ($\sigma=0$)}

We consider in this section inverse problems where the model (\ref{eq:eq_pb_inverse}) is a noiseless model, \text{i.e.} $\sigma=0$. We carry out the same experiments as in the previous noisy model ($\sigma=0.2$).

Figures \ref{fig:convergence_speed_noiseless} and \ref{fig:Mask_random_MNIST_noiseless}  represent respectively the recoveries for a super-resolution inverse problem and for a missing pixel inpainting inverse problem. We consider different $\lambda$ of the regularized autoencoders.

\begin{figure}[!h]
\centering
\begin{subfigure}{0.45\textwidth}
	\includegraphics[width=1\textwidth]{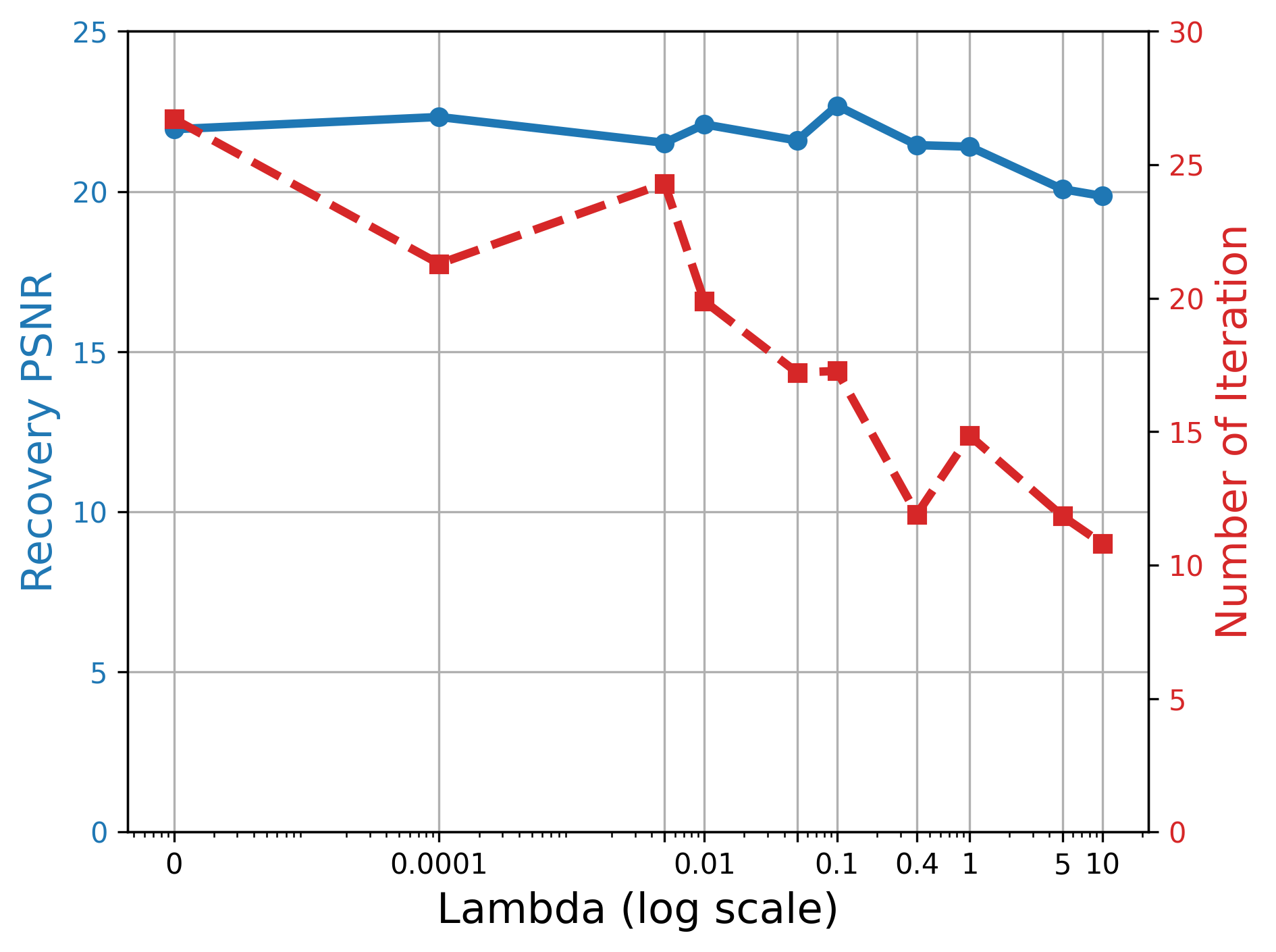}
	\caption*{Recovery PSNR and convergence speed as a function of the weight $\lambda$}\label{graph_supresmnist_noiseless}
\end{subfigure}
\hspace{0.1cm}
\begin{subfigure}{0.45\textwidth}
	\includegraphics[width=1\textwidth]{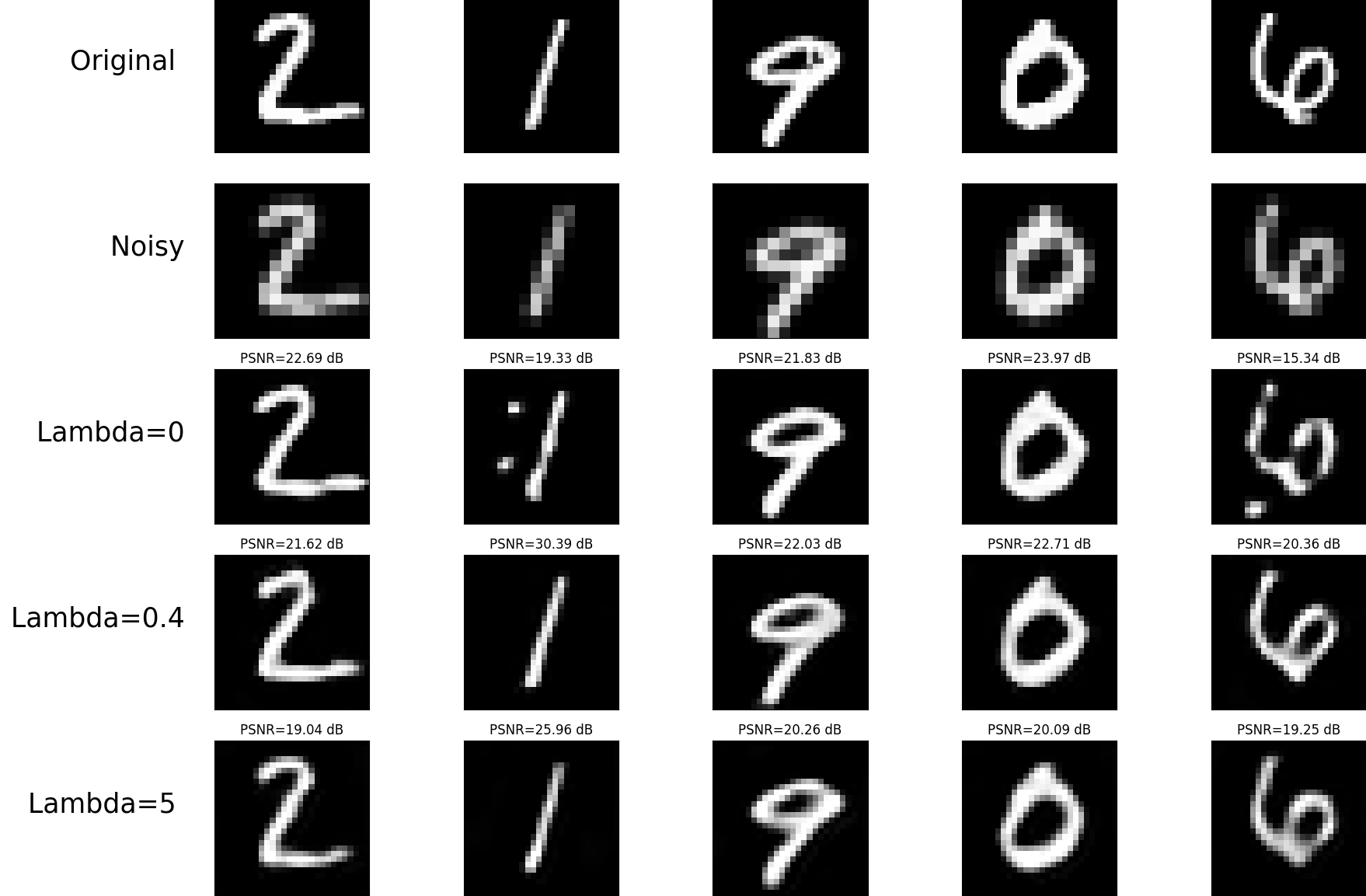}
	\caption*{Recovery of subsampled MNIST images through different PGDs.}
\end{subfigure}
\caption{Recovery of MNIST images from a noiseless super resolution inverse problem with different PGDs using autoencoders weighted differently }\label{fig:convergence_speed_noiseless}
\end{figure}

As previously, Figure \ref{fig:convergence_speed_noiseless} shows that the convergence of the PGD is clearly faster when increasing $\lambda$ while the recovery PSNR remains stable, which indicates that SOR allows an acceleration of PGDs without a significant loss of its performance.

\begin{figure}[!h]
\centering
\begin{subfigure}{0.4\textwidth}
	\includegraphics[width=1\textwidth]{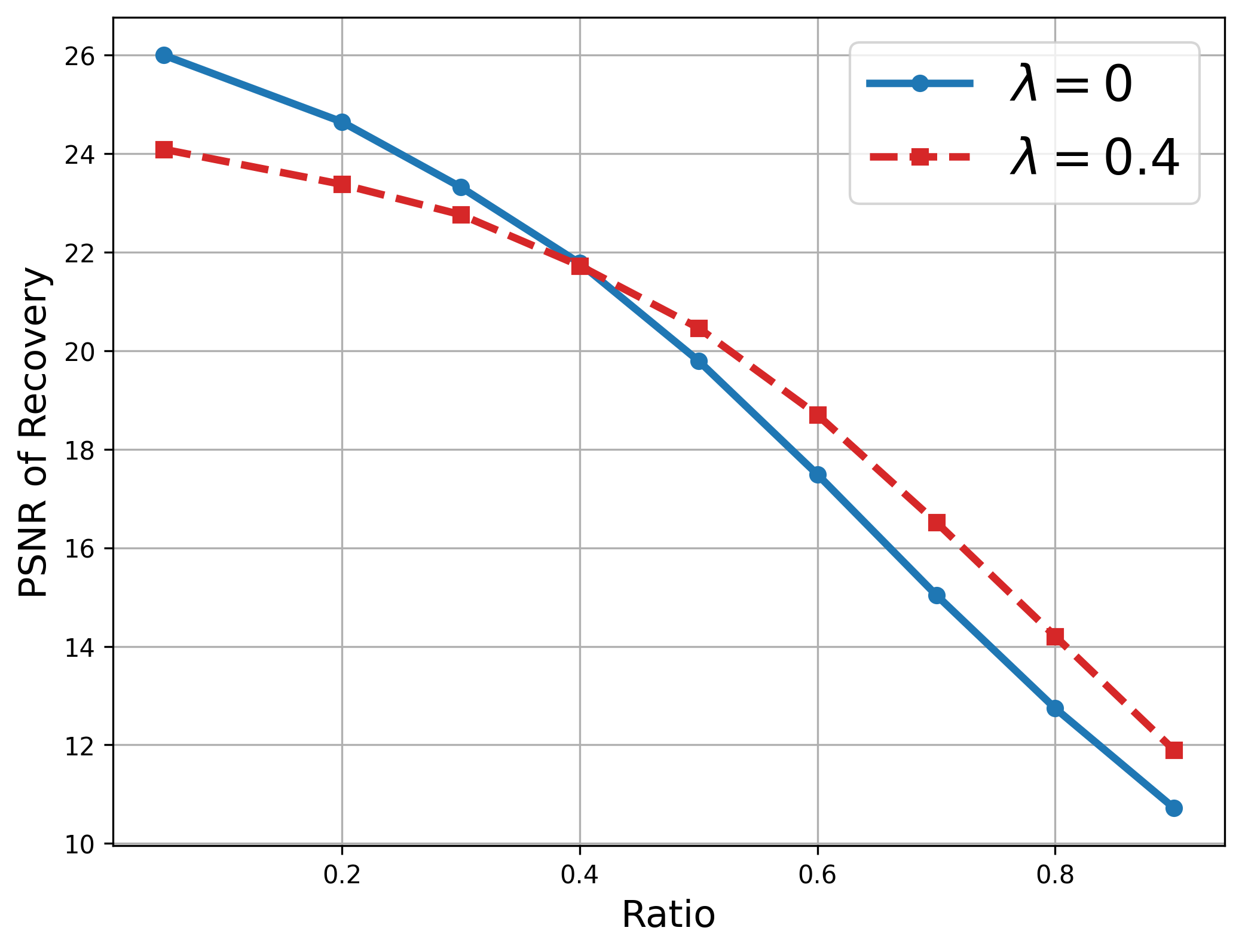}
	\caption*{Evolution of the mean PSNR of recovery as a function of the ratio}\label{fig:graph_MNIST_mask_random_noiseless}
\end{subfigure}
\hspace{0.33cm}
\begin{subfigure}{0.1565\textwidth}
	\includegraphics[width=1\textwidth]{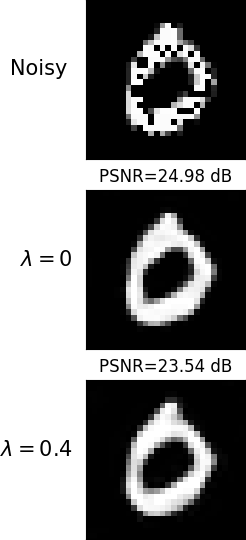}
	\caption*{~~~~~~~~~~Ratio = 0.2} 
\end{subfigure}
\hspace{0.1cm}
\begin{subfigure}{0.1\textwidth}
	\includegraphics[width=1\textwidth]{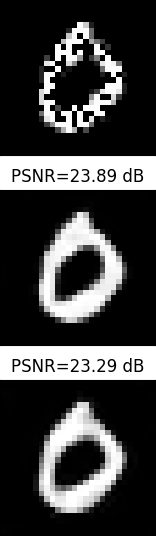}
	\caption*{Ratio = 0.4}
\end{subfigure}
\hspace{0.1cm}
\begin{subfigure}{0.1\textwidth}
	\includegraphics[width=1\textwidth]{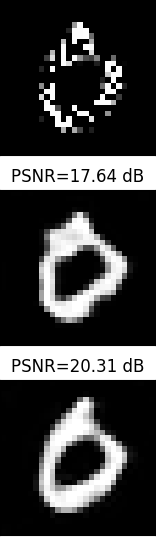}
	\caption*{Ratio = 0.6}
\end{subfigure}
\hspace{0.1cm}
\begin{subfigure}{0.1\textwidth}
	\includegraphics[width=1\textwidth]{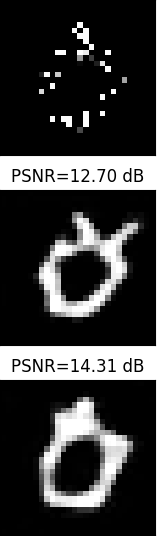}
	\caption*{Ratio = 0.8}
\end{subfigure}

\caption{Graph and visual results of the evolution of the recovered MNIST image from a noiseless inpainting problem with the increase of the inpainting ratio for the different autoencoders. }\label{fig:Mask_random_MNIST_noiseless}
\end{figure}

For the missing pixel inpainting, Figure \ref{fig:Mask_random_MNIST_noiseless}, the autoencoder with $\lambda=0.4$ is clearly more stable to the deterioration of the measurement operator. This experiment proves the robustness of the PGD to the deterioration of $\mathbf{A}$ when using a regularized DPP.

\subsection{Noisy inverse problems with  $\sigma=0.05$}

We consider in this section inverse problems where the noise has a greater $\sigma$: $\sigma=0.05$ instead of $\sigma=0.02$. We carry out the same experiments as in the previous cases.

Figures \ref{fig:convergence_speed_noisy} and \ref{fig:Mask_random_MNIST_noisy}  represent respectively the recoveries from a super-resolution inverse problem and from a missing pixel inpainting inverse problem. We consider different $\lambda$ for the regularized autoencoders. The same observations as in the cases $\sigma=0$ and $\sigma=0.02$ are still valid.

\begin{figure}[!h]
\centering
\begin{subfigure}{0.45\textwidth}
	\includegraphics[width=1\textwidth]{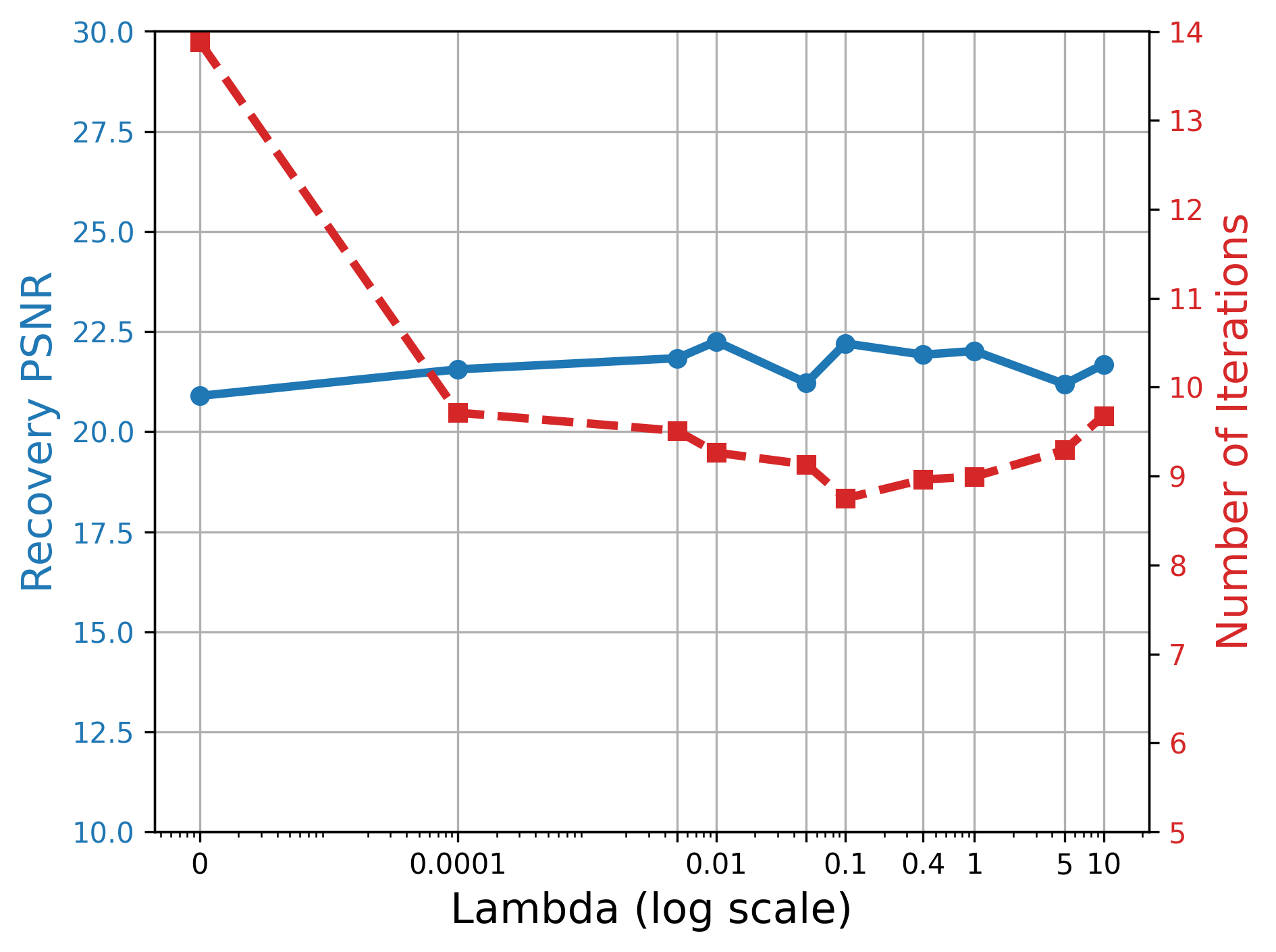}
	\caption*{Recovery PSNR and convergence speed as a function of the weight $\lambda$}\label{graph_supresmnist_noisy}
\end{subfigure}
\hspace{0.1cm}
\begin{subfigure}{0.45\textwidth}
	\includegraphics[width=1\textwidth]{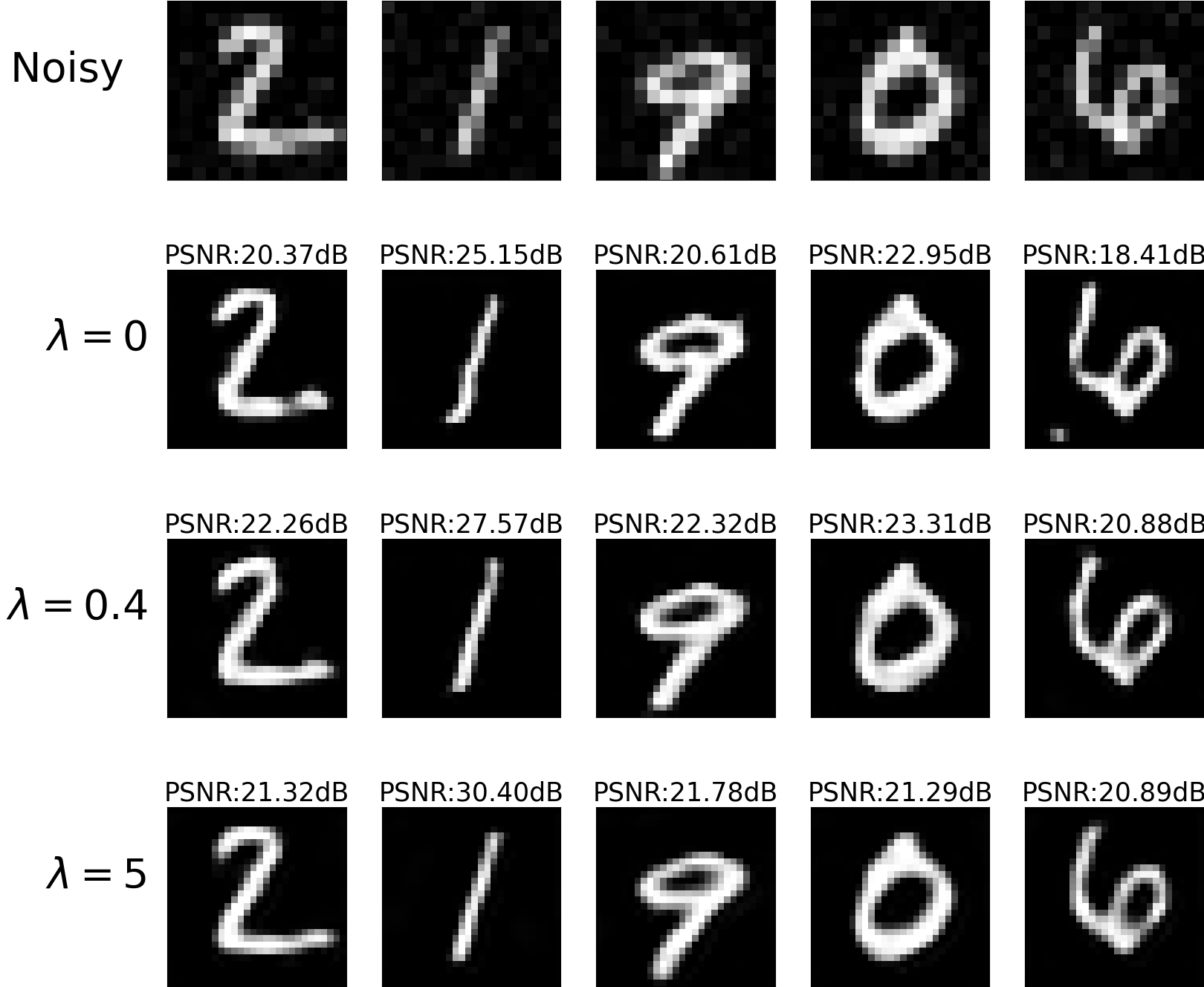}
	\caption*{Recovery of subsampled MNIST images through different PGDs.}
\end{subfigure}
\caption{Recovery of MNIST images from a noisy super-resolution inverse problem with different PGDs using autoencoders weighted differently. The level of the noise is 0.05. }\label{fig:convergence_speed_noisy}
\end{figure}

\begin{figure}[!h]
\centering
\begin{subfigure}{0.4\textwidth}
	\includegraphics[width=1\textwidth]{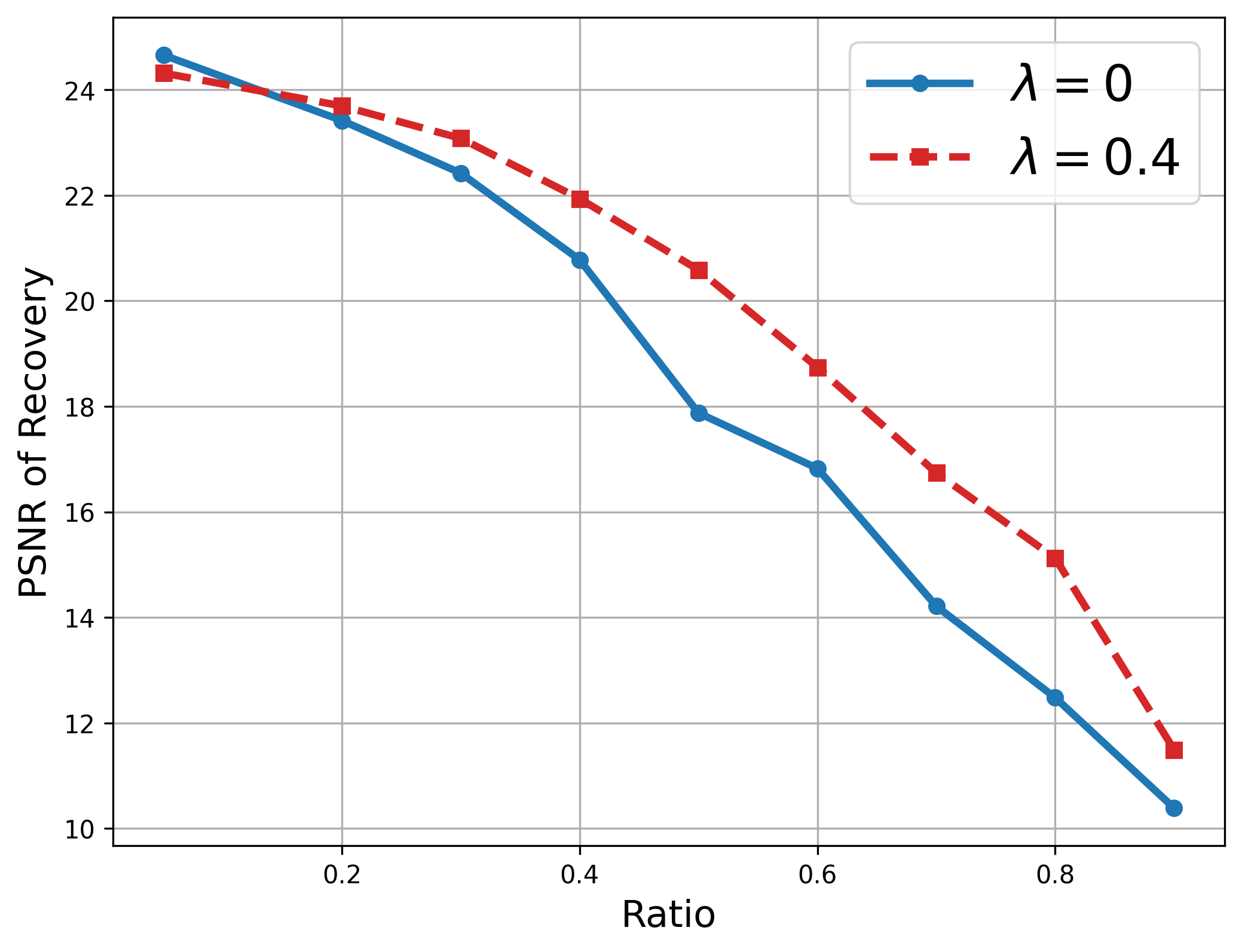}
	\caption*{Evolution of the mean PSNR of recovery as a function of the ratio}\label{fig:graph_MNIST_mask_random_noisy}
\end{subfigure}
\hspace{0.33cm}
\begin{subfigure}{0.1565\textwidth}
	\includegraphics[width=1\textwidth]{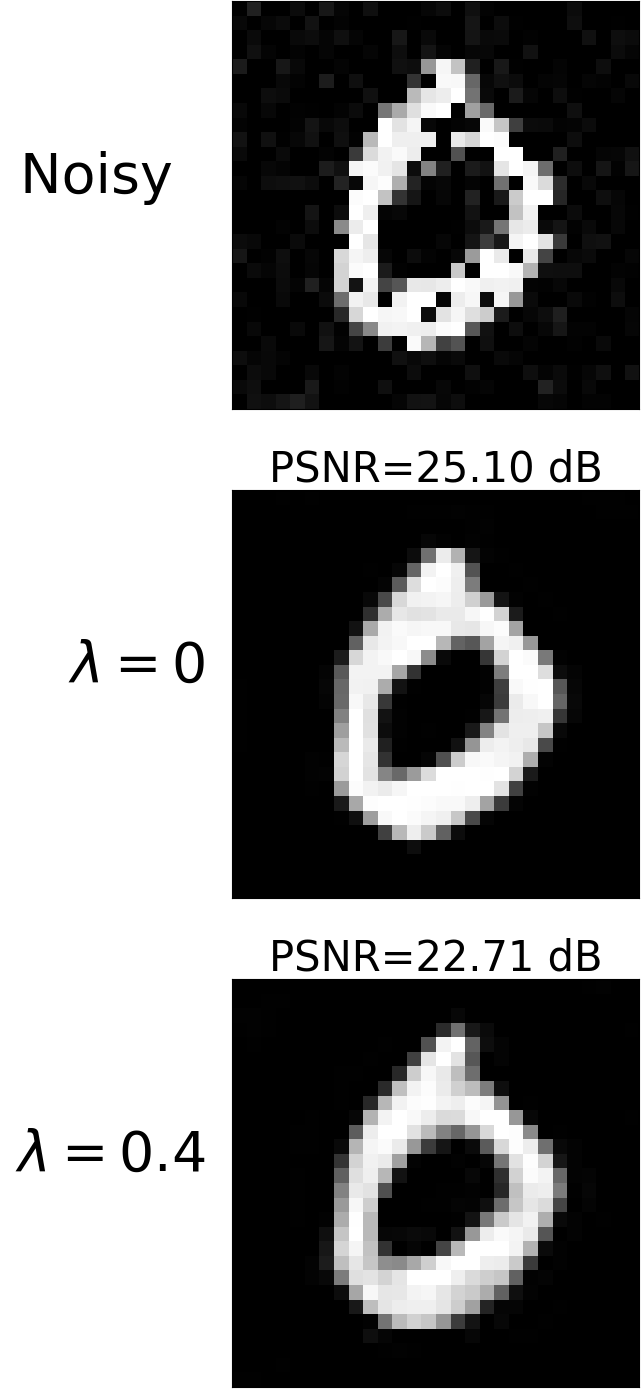}
	\caption*{~~~~~~~~~~Ratio = 0.2} 
\end{subfigure}
\hspace{0.1cm}
\begin{subfigure}{0.1\textwidth}
	\includegraphics[width=1\textwidth]{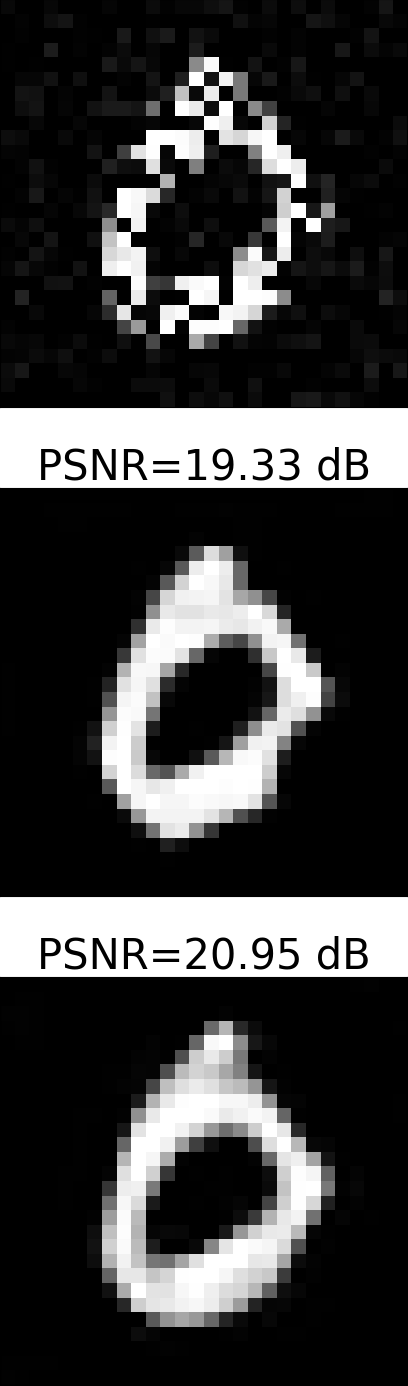}
	\caption*{Ratio = 0.4}
\end{subfigure}
\hspace{0.1cm}
\begin{subfigure}{0.1\textwidth}
	\includegraphics[width=1\textwidth]{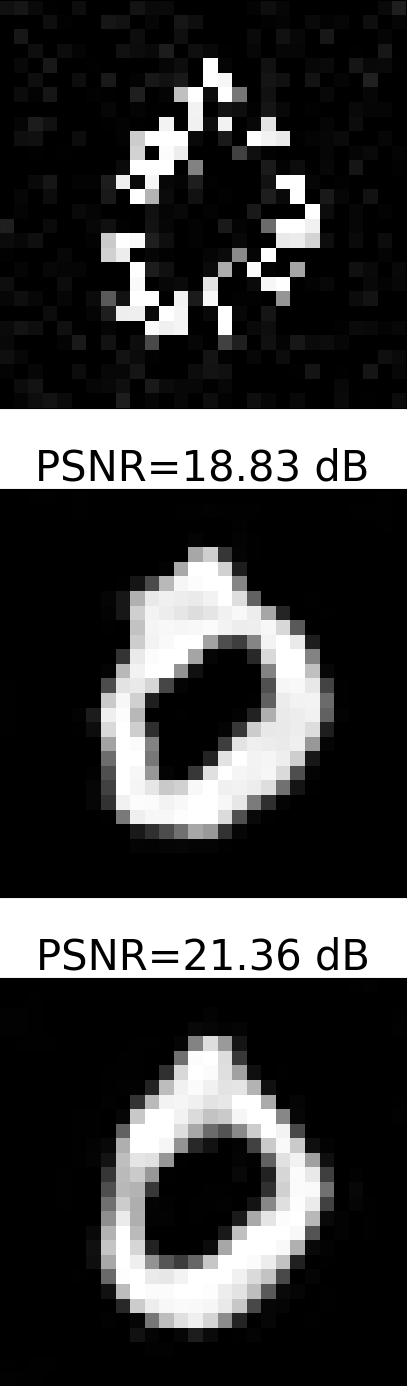}
	\caption*{Ratio = 0.6}
\end{subfigure}
\hspace{0.1cm}
\begin{subfigure}{0.1\textwidth}
	\includegraphics[width=1\textwidth]{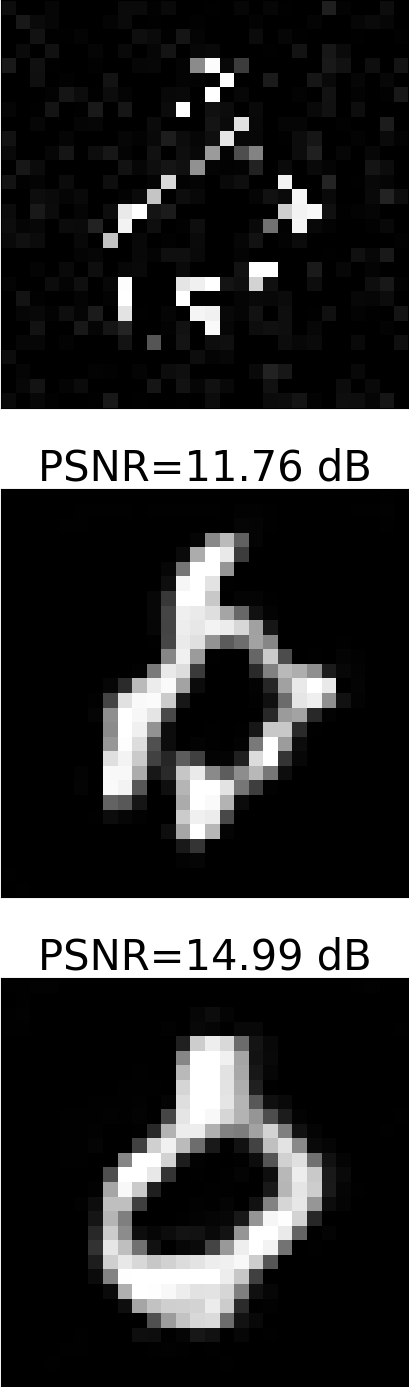}
	\caption*{Ratio = 0.8}
\end{subfigure}

\caption{Graph and visual results of the evolution of the recovered MNIST image with the increase of the inpainting ratio for the different autoencoders. The level of the noise is 0.05. }\label{fig:Mask_random_MNIST_noisy}
\end{figure}

\newpage 
\section{Additional experiments on CelebA}
In this section, we present additional experiments for CelebA. We increase, in particular, the noise level in each of them. We also added an inpainting inverse problem of ratio 0.7.
\subsection{Inpainting}
Table \ref{table:Inpainting_0.7} represents the metrics related to the inpainting inverse problem of ratio 0.7. As observed in \ref{table:Inpainting_0.4} and \ref{table:Inpainting_0.6}, SOR performs than PGD and RED in more challenging situations, \textbf{i.e.} when the noise and the inpainting ratio increase. In addition to that, it still converges faster compared to RED (as PGD struggles to recover acceptable images, see \ref{fig:inpainting_0.7_0} and \ref{fig:inpainting_0.7_0.02}).

\begin{table}[!h]
\centering
\small
\begin{tabular}{ccccc}
	\toprule
	$\sigma$           & Method & PSNR$\uparrow$  & SSIM$\uparrow$ & Iterations$\downarrow$\\
	\midrule
	\multirow{3}{*}{0}    & PGD    & 17.23 & 0.46 & \textbf{2.86}          \\
	& SOR    & \textbf{26.60} & \textbf{0.81} & 16.36         \\
	& RED    & 24.17 & 0.59 & 36.8         \\
	\hline
	\multirow{3}{*}{0.02} & PGD    & 17,32 & 0,46 & \textbf{2.88}         \\
	& SOR    & \textbf{26.61} & \textbf{0.81} & 19.2          \\
	& RED    & 24.06 & 0.58 & 37.76       
\end{tabular}
\caption{Metrics of an inpainting ratio of 0.7: SOR gives the best recovery results while still being quicker in both noiseless and noisy cases (compared to RED).}\label{table:Inpainting_0.7}
\end{table}

\begin{figure}[!h]
\centering
\begin{subfigure}{0.23\textwidth}
	\includegraphics[width=1\textwidth]{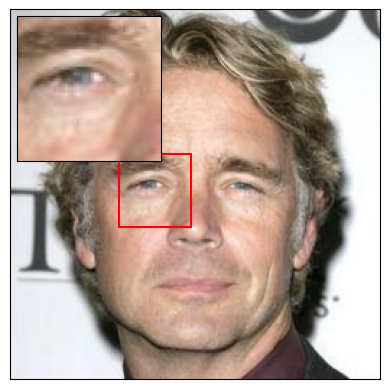}
	\caption{Original}
\end{subfigure}
\hspace{0.02cm}
\begin{subfigure}{0.23\textwidth}
	\includegraphics[width=1\textwidth]{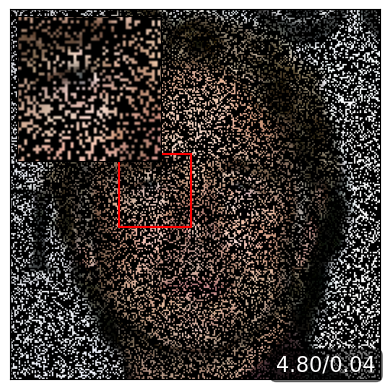}
	\caption{Observed}
\end{subfigure}
\hspace{0.02cm}
\begin{subfigure}{0.23\textwidth}
	\includegraphics[width=1\textwidth]{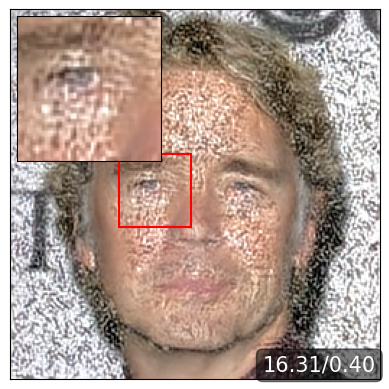}
	\caption{PGD}
\end{subfigure}

\begin{subfigure}{0.23\textwidth}
	\includegraphics[width=1\textwidth]{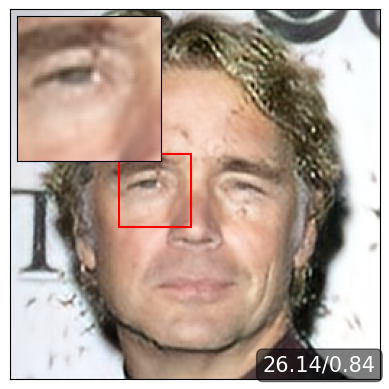}
	\caption{SOR}
\end{subfigure}
\hspace{0.02cm}
\begin{subfigure}{0.23\textwidth}
	\includegraphics[width=1\textwidth]{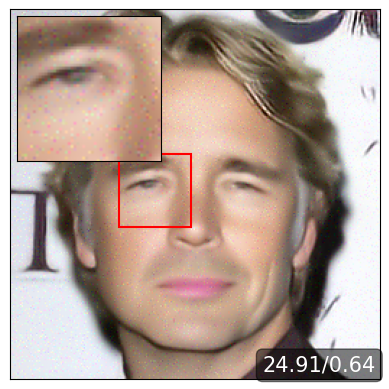}
	\caption{RED}
\end{subfigure}
\caption{Inpainting of an image with an inpainting ratio of 0.7 and noise level of 0. Visually, SOR produces more textures compared to RED which tends to average the image. However, it adds more artifacts.}\label{fig:inpainting_0.7_0}
\end{figure}

\subsection{Visual results with a greater noise level}
To further the comparisons, we tested PGD, SOR and RED on images with an increased noise level. These visual experiments show in particular how SOR is more stable compared to PGD and RED in more challenging contexts.
\subsubsection{Inpainting}
\begin{figure}
\centering
\begin{subfigure}{0.23\textwidth}
	\includegraphics[width=1\textwidth]{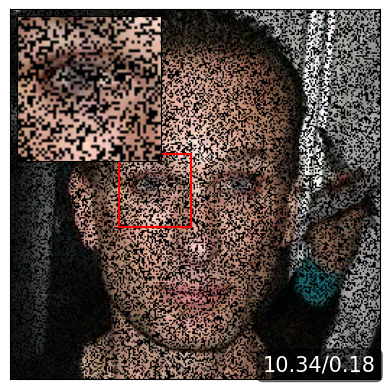}
	\caption{Observed}
\end{subfigure}
\hspace{0.02cm}
\begin{subfigure}{0.23\textwidth}
	\includegraphics[width=1\textwidth]{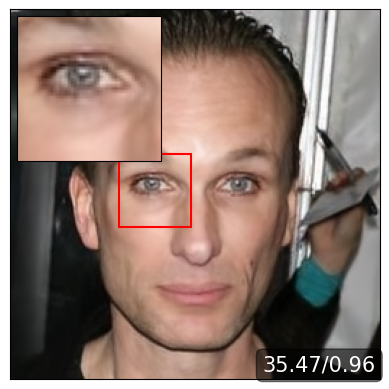}
	\caption{PGD}
\end{subfigure}
\hspace{0.02cm}
\begin{subfigure}{0.23\textwidth}
	\includegraphics[width=1\textwidth]{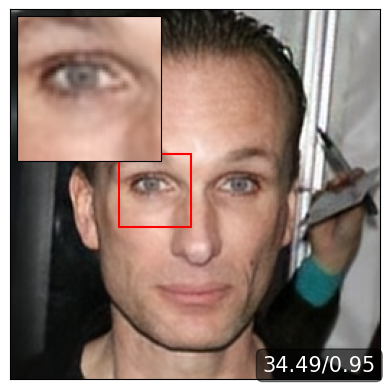}
	\caption{SOR}
\end{subfigure}
\hspace{0.02cm}
\begin{subfigure}{0.23\textwidth}
	\includegraphics[width=1\textwidth]{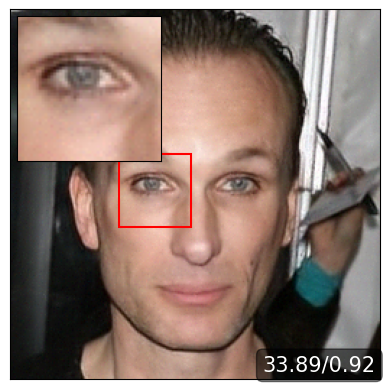}
	\caption{RED}
\end{subfigure}

\caption{Inpainting of an image with an inpainting ratio of 0.4 and noise level of 0.02}\label{fig:inpainting_0.4_0.02}
\end{figure}

\begin{figure}[!h]
\centering
\begin{subfigure}{0.23\textwidth}
	\includegraphics[width=1\textwidth]{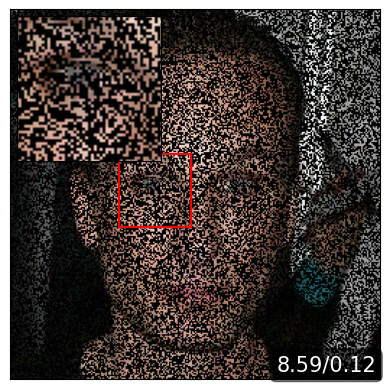}
	\caption{Observed}
\end{subfigure}
\hspace{0.02cm}
\begin{subfigure}{0.23\textwidth}
	\includegraphics[width=1\textwidth]{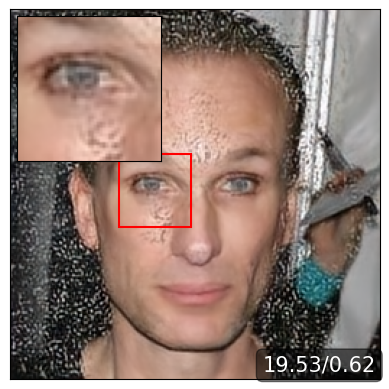}
	\caption{PGD}
\end{subfigure}
\hspace{0.02cm}
\begin{subfigure}{0.23\textwidth}
	\includegraphics[width=1\textwidth]{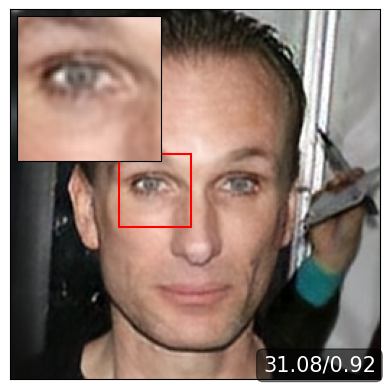}
	\caption{SOR}
\end{subfigure}
\hspace{0.02cm}
\begin{subfigure}{0.23\textwidth}
	\includegraphics[width=1\textwidth]{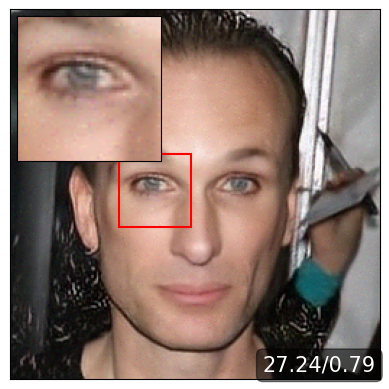}
	\caption{RED}
\end{subfigure}
\caption{Inpainting of an image with an inpainting ratio of 0.6 and noise level of 0.02}\label{fig:inpainting_0.6_0.02}
\end{figure}

\begin{figure}[!h]
\centering
\begin{subfigure}{0.23\textwidth}
	\includegraphics[width=1\textwidth]{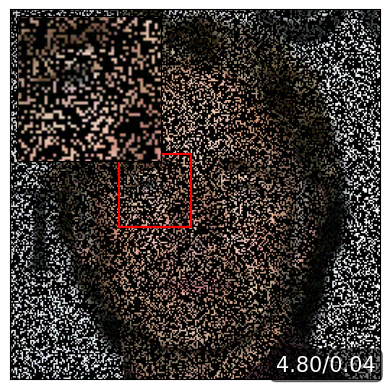}
	\caption{Observed}
\end{subfigure}
\hspace{0.02cm}
\begin{subfigure}{0.23\textwidth}
	\includegraphics[width=1\textwidth]{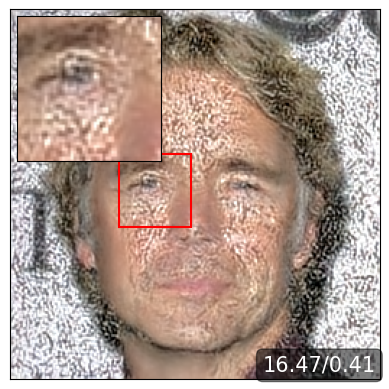}
	\caption{PGD}
\end{subfigure}
\hspace{0.02cm}
\begin{subfigure}{0.23\textwidth}
	\includegraphics[width=1\textwidth]{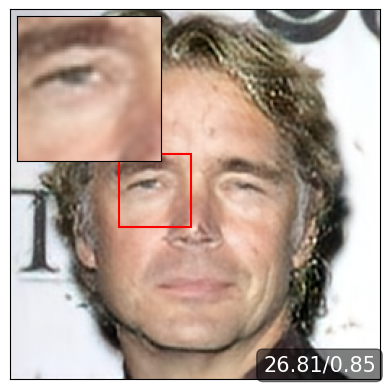}
	\caption{SOR}
\end{subfigure}
\hspace{0.02cm}
\begin{subfigure}{0.23\textwidth}
	\includegraphics[width=1\textwidth]{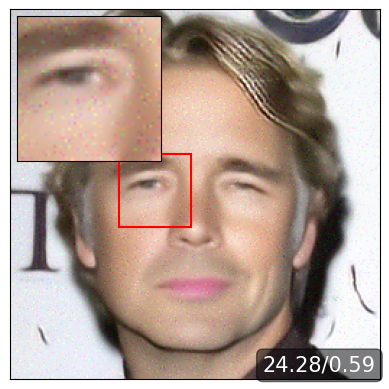}
	\caption{RED}
\end{subfigure}
\caption{Inpainting of an image with an inpainting ratio of 0.7 and noise level of 0.02}\label{fig:inpainting_0.7_0.02}
\end{figure}

\newpage 
\subsubsection{Super-resolution}
\begin{figure}[!h]
\centering
\begin{subfigure}{0.23\textwidth}
	\includegraphics[width=1\textwidth]{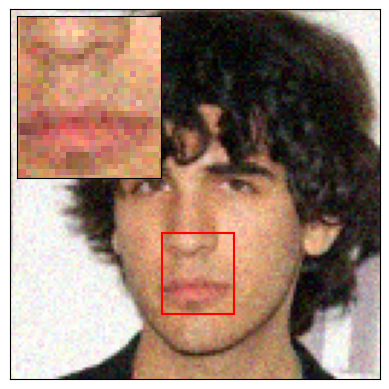}
	\caption{Observed}
\end{subfigure}
\hspace{0.02cm}
\begin{subfigure}{0.23\textwidth}
	\includegraphics[width=1\textwidth]{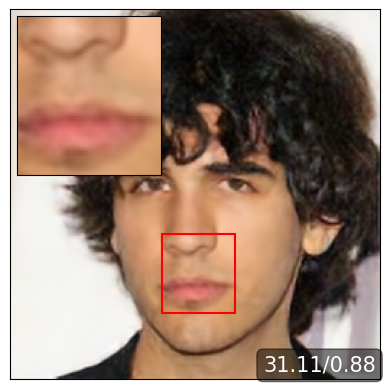}
	\caption{PGD}
\end{subfigure}
\hspace{0.02cm}
\begin{subfigure}{0.23\textwidth}
	\includegraphics[width=1\textwidth]{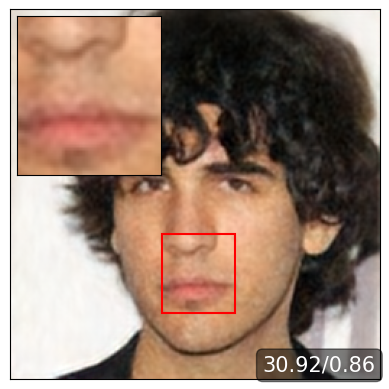}
	\caption{SOR}
\end{subfigure}
\hspace{0.02cm}
\begin{subfigure}{0.23\textwidth}
	\includegraphics[width=1\textwidth]{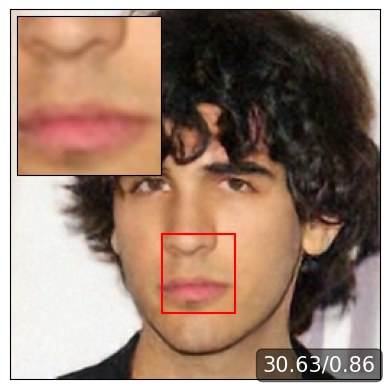}
	\caption{RED}
\end{subfigure}
\caption{Super-resolution of an image with a factor 2 and noise level of 0.05. All methods correctly recover the original image, with PGD being slightly the best.}
\end{figure}

\begin{figure}[!h]
\centering
\begin{subfigure}{0.23\textwidth}
	\includegraphics[width=1\textwidth]{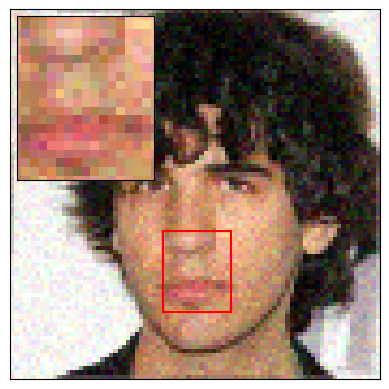}
	\caption{Observed}
\end{subfigure}
\hspace{0.02cm}
\begin{subfigure}{0.23\textwidth}
	\includegraphics[width=1\textwidth]{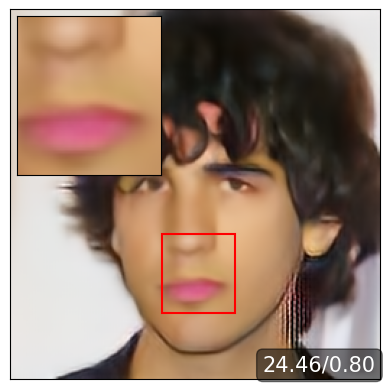}
	\caption{PGD}
\end{subfigure}
\hspace{0.02cm}
\begin{subfigure}{0.23\textwidth}
	\includegraphics[width=1\textwidth]{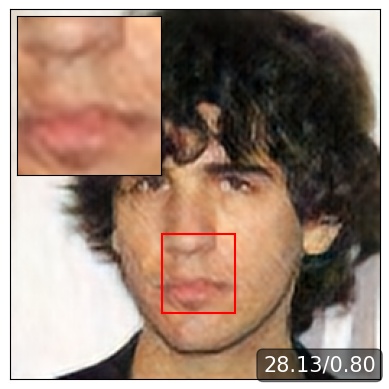}
	\caption{SOR}
\end{subfigure}
\hspace{0.02cm}
\begin{subfigure}{0.23\textwidth}
	\includegraphics[width=1\textwidth]{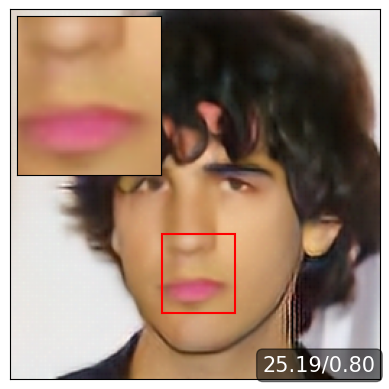}
	\caption{RED}
\end{subfigure}
\caption{Super-resolution of an image with a factor 3 and noise level of 0.05. All methods correctly recover the original image, with PGD being slightly the best.}
\end{figure}

\newpage 
\subsubsection{Deblurring}

~~
\begin{figure}[!h]
\centering
\hspace{0.02cm}
\begin{subfigure}{0.23\textwidth}
	\includegraphics[width=1\textwidth]{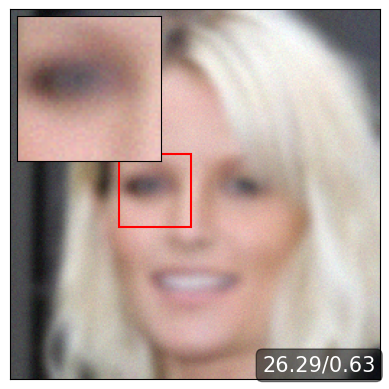}
	\caption{Observed}
\end{subfigure}
\hspace{0.02cm}
\begin{subfigure}{0.23\textwidth}
	\includegraphics[width=1\textwidth]{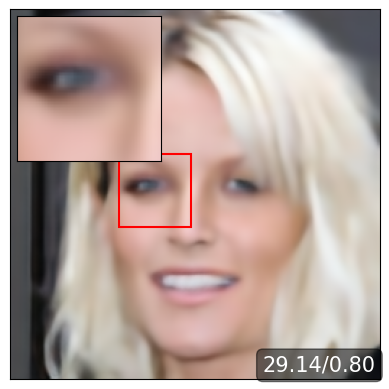}
	\caption{PGD}
\end{subfigure}	
\hspace{0.02cm}
\begin{subfigure}{0.23\textwidth}
	\includegraphics[width=1\textwidth]{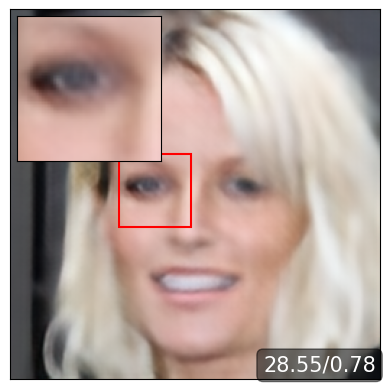}
	\caption{SOR}
\end{subfigure}
\hspace{0.02cm}
\begin{subfigure}{0.23\textwidth}
	\includegraphics[width=1\textwidth]{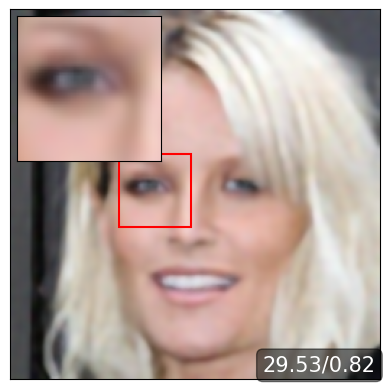}
	\caption{RED}
\end{subfigure}	
\caption{Deblurring of an image of a 5$\times$5 Gaussian kernel and noise level of 0.02.}
\end{figure}

\end{document}